\documentclass[11pt,a4paper]{article}

\usepackage{preprint}
\usepackage{amscd,mathtools}
\usepackage{booktabs,tabularx,array}
\usepackage{xurl}

\newcommand{\C}{\mathbb{C}}

\newcommand{\N}{\mathbb{N}}

\newcommand{\rank}{\operatorname{rank}}
\newcommand{\ran}{\operatorname{ran}}
\newcommand{\spec}{\operatorname{spec}}
\newcommand{\dist}{\operatorname{dist}}
\newcommand{\Cl}{\operatorname{Cl}}
\newcommand{\poly}{\operatorname{poly}}
\newcommand{\BQP}{\mathsf{BQP}}
\newcommand{\BQPone}{\mathsf{BQP}_{1}}
\newcommand{\SDQCone}{\mathsf{SDQC}_{1}}
\newcommand{\QMAone}{\mathsf{QMA}_{1}}
\newcommand{\CX}{\mathsf{CX}}
\newcommand{\CCX}{\mathsf{CCX}}
\newcommand{\cH}{\mathcal{H}}

\newcommand{\cV}{\mathcal{V}}

\newcommand{\ket}[1]{\lvert #1\rangle}
\newcommand{\bra}[1]{\langle #1\rvert}

\newcommand{\Xin}{X_{\mathrm{in}}}
\newcommand{\Xout}{X_{\mathrm{out}}}

\newtheorem{assumption}[theorem]{Assumption}
\title{Exact Kernel Transfer to Clique Complexes and the Hardness of
Normalized Persistence}
\author{Cheng Xin\\[0.4em]\small Department of Computer Science\\\small California State University, Fresno\\\small\texttt{cxin@mail.fresnostate.edu}}
\date{}
\AtBeginDocument{\hypersetup{
  pdftitle={Exact Kernel Transfer to Clique Complexes and the Hardness of Normalized Persistence},
  pdfauthor={Cheng Xin},
  pdfsubject={Preprint on exact normalized persistence in clique filtrations},
  pdfkeywords={clique homology, normalized persistence, spectral gap, quantum complexity}}}

\begin{document}
\raggedbottom
\maketitle

\begin{abstract}
For an inclusion of clique complexes \(X_1\subseteq X_2\), the normalized persistence in degree \(d\) is the fraction
\[
\frac{\rank[H_d(X_1)\to H_d(X_2)]}{\dim H_d(X_1)}
\]
of the degree-\(d\) holes of \(X_1\) that survive in \(X_2\). Estimating this fraction requires controlling the exact homology at both endpoints and the map induced by inclusion, even when both endpoint Laplacians have inverse-polynomial spectral gaps.
We prove that estimating the fraction to additive error \(1/24\) is hard for \(\BQPone^{G_2}\), the perfect-completeness class over the exact gate set \(G_2=\{X,\CX,\CCX,H\otimes H\}\), and hence for \(\BQPone\) over every finite gate set with entries in a cyclotomic field \(\mathbb Q(\zeta_{2^k})\), even for unweighted clique complexes.
The main tool is a finite-certificate kernel-transfer theorem. For a fixed palette of weighted clique gadgets satisfying finitely many exactly checkable local conditions, every unit chain \(x\) of the full geometric complex satisfies
\[
\dist(x,K)^2\le C\Bigl(t\lambda^2+\frac{\langle x,\Delta x\rangle}{g\lambda^{26}}\Bigr),
\]
where \(K\) is the embedded kernel of the simulated projector Hamiltonian, \(g\) its gap, \(t\) the number of gadgets, and \(\lambda\) the private vertex weight. Since \(\lambda\) is chosen independently of \(g\), the geometric Laplacian has exactly \(\dim K\) zero modes and a gap linear in \(g\) above them. Exact fillings identify the endpoint homology with a quotient \(V/W_A\) of the register cycle space, and nested term sets induce the natural quotient epimorphisms, so the persistent rank equals the later kernel dimension without any choice of compatible harmonic representatives. A fixed eight-dimensional label register turns a \(\BQPone^{G_2}\) verifier into instances with \(\beta_d(X_1)=8\) and normalized persistence \(3/4\) or \(1/8\), and an established common-copy blow-up transfers everything to unweighted graphs.

\end{abstract}

\section{Introduction}
\label{sec:intro}

Persistent homology records which topological features of a growing simplicial complex survive as the complex grows~\cite{edelsbrunner2002topological,zomorodian2005computing,carlsson2009topology}. For a two-step filtration \(X_1\subseteq X_2\) and a degree \(d\), the basic invariant is the persistent Betti number, the rank of the induced map \(H_d(X_1)\to H_d(X_2)\). Dividing it by \(\beta_d(X_1)\) gives a scale-free and directly interpretable statistic,
\begin{equation}
\label{eq:intro-ratio}
q_d(X_1,X_2)=\frac{\rank[H_d(X_1)\to H_d(X_2)]}{\beta_d(X_1)}\in[0,1],
\end{equation}
the fraction of the degree-\(d\) holes of \(X_1\) that are still present in \(X_2\). Recent work~\cite{lowe2026normalized} introduced this quantity under the name \emph{normalized harmonic persistence} and asked for its complexity on clique complexes of weighted graphs, the input model of quantum topological data analysis~\cite{lloyd2016quantum,hayakawa2022quantum,gyurik2022towards,berry2024analyzing,mcardle2026streamlined}. The motivation is that the known quantum algorithms for Betti and persistent Betti numbers return quantities normalized by the number of simplices~\cite{lloyd2016quantum,hayakawa2022quantum,schmidhuber2023limitations,apers2023simple}, whereas a fraction of holes is the statistic one would actually ask for. Under inverse-polynomial spectral-gap promises on the two endpoint Laplacians, that work proved that a low-energy relaxation of \(q_d\) is \(\mathsf{DQC}_1\)-hard and in \(\BQP\), conjectured that \(q_d\) itself is hard for a perfect-completeness variant \(\SDQCone\) of \(\mathsf{DQC}_1\) (Conjecture~1 of~\cite{lowe2026normalized}), and reduced this conjecture to a \emph{coefficient-sensitive kernel realization} of history Hamiltonians by clique complexes (Conjecture~2 of~\cite{lowe2026normalized}). That work identifies a kernel-preserving simulation of frustration-free Hamiltonians by combinatorial Laplacians as the missing ingredient and an important open question.

\paragraph{Why exact kernels are hard to realize.}
The exact-kernel primitive available in the literature is a clique-gadget construction~\cite{crichigno2024clique,king2026gapped}: a graph whose clique complex has one hole per computational basis state, together with one weighted gadget per local projector of a frustration-free Hamiltonian \(H\). The gadgets produce the desired zero modes, and the many-gadget spectral analysis~\cite{king2026gapped} proves a gap when the simulated \(H\) is positive definite. A persistence reduction needs more. It needs two Hamiltonians \(H_{\rm in}\) and \(H_{\rm out}=H_{\rm in}+H_{\rm rej}\) with known but generally large kernels, the exact multiplicity \(\dim\ker\Delta=\dim\ker H\) at both endpoints, an inverse-polynomial gap above the \emph{complete} kernel at both endpoints, and control of the inclusion-induced map. One must rule out every additional geometric zero mode and every small positive eigenvalue, including for chains supported almost entirely on private gadget vertices, with constants that do not depend on the exponentially large padded register.

\paragraph{Results.}
We carry out this realization for the history Hamiltonians of exact circuits over the gate set \(G_2=\{X,\CX,\CCX,H\otimes H\}\), and obtain, to our knowledge, the first unconditional hardness theorem for the exact fraction~\eqref{eq:intro-ratio} on clique complexes. Under the exact-circuit convention of~\cite{rudolph2026universal}, \(\BQPone^{G_2}\) is the class of promise problems decided with perfect completeness by uniform exact \(G_2\) circuits; by the exact simulation theorem in~\cite{rudolph2026universal} it contains \(\BQPone^{G}\) for every finite gate set \(G\) with entries in a cyclotomic field \(\mathbb Q(\zeta_{2^k})\), for instance Clifford\(+T\).

\begin{theorem}[Main theorem; \cref{thm:weighted-hardness,cor:unweighted-hardness}]
\label{thm:intro-main}
There is a deterministic polynomial-time reduction that maps an instance \(x\) of any \(\BQPone^{G_2}\) promise problem \(L\) to nested unweighted clique complexes \(\Xin\subseteq\Xout\) on a common vertex set, a degree \(d\), and rationals \(\gamma_{\rm in},\gamma_{\rm out}\ge1/\poly(|x|)\) such that \(\Delta_{\Xin,d}\) and \(\Delta_{\Xout,d}\) have no eigenvalue in \((0,\gamma_{\rm in})\) and \((0,\gamma_{\rm out})\) respectively, \(\beta_d(\Xin)=8\), and
\[
q_d(\Xin,\Xout)=\begin{cases}3/4,&x\in L,\\[2pt] 1/8,&x\notin L.\end{cases}
\]
Consequently, estimating \(q_d\) to additive error \(1/24\) with success probability \(2/3\), for weighted or unweighted clique complexes with inverse-polynomial endpoint gaps promised (Problem~9 of~\cite{lowe2026normalized}), is \(\BQPone^{G_2}\)-hard.
\end{theorem}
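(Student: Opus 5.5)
The plan is to run the reduction through three layers: a verifier-to-Hamiltonian step, a Hamiltonian-to-weighted-clique-complex step (the kernel-transfer theorem from the abstract), and a weighted-to-unweighted step (the common-copy blow-up).

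\emph{Step 1: verifier to Hamiltonians.} Start from a uniform exact \(G_2\) verifier \(V_x\) with perfect completeness. Adjoin a fixed eight-dimensional label register, i.e.\ three qubits untouched by the circuit. Build a Feynman--Kitaev history Hamiltonian \(H_{\rm in}\) from clock, propagation and input-initialization terms. Every term is a projector with entries in \(\mathbb Q\); the gate \(H\otimes H\) has rational entries \(\tfrac12(\pm1)\), which is exactly why \(G_2\) is used. Since no input-initialization term acts on the label register, \(\ker H_{\rm in}\) is the history space tensored with \(\C^8\), so \(\dim\ker H_{\rm in}=8\).

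Then set \(H_{\rm out}=H_{\rm in}+H_{\rm rej}\). Here \(H_{\rm rej}\) penalizes a rejecting output on six of the eight labels and penalizes unconditionally on the other two. In the yes case perfect completeness gives \(\dim\ker H_{\rm out}=6\). In the no case the output constraint is frustrated for the six labels; to bring the no-case kernel down to exactly one label, let one of the eight labels be unpenalized. The counts \(6/8=3/4\) and \(1/8\) fix this split. Both gaps are \(1/\poly\) by standard Kitaev/geometric-lemma bounds, applied in the no case to the soundness gap, which we first amplify with a fixed number of repetitions.

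\emph{Step 2: weighted clique complexes.} Realize each term by the fixed palette of weighted clique gadgets and apply the kernel-transfer theorem. For every unit chain \(x\),
\[
\dist(x,K)^2\le C\Bigl(t\lambda^2+\frac{\langle x,\Delta x\rangle}{g\lambda^{26}}\Bigr).
\]
Choose \(\lambda\) polynomially small so that \(Ct\lambda^2<1/2\). Then \(\ker\Delta\) has dimension exactly \(\dim K\), and the gap is at least \(g\lambda^{26}/(2C)\), which is still \(1/\poly\). This gives the rationals \(\gamma_{\rm in}\) and \(\gamma_{\rm out}\).

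Using the exact-filling identification \(H_d(X_A)\cong V/W_A\) and the fact that the term sets are nested (\(A_{\rm in}\subseteq A_{\rm out}\), so \(W_{\rm in}\subseteq W_{\rm out}\)), the inclusion-induced map is the quotient epimorphism \(V/W_{\rm in}\to V/W_{\rm out}\). Its rank is therefore \(\dim H_d(\Xout)=\dim\ker H_{\rm out}\). Hence \(\beta_d=8\) and \(q_d\in\{3/4,1/8\}\).

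\emph{Step 3: unweighted graphs and the estimation claim.} Apply the established common-copy blow-up. It preserves homology, the nesting, and inverse-polynomial gaps up to polynomial factors, which gives \cref{cor:unweighted-hardness}. The two promised values \(3/4\) and \(1/8\) are more than \(2/24\) apart, so an estimator with additive error \(1/24\) decides \(L\).

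The step I expect to be the main obstacle is Step 1's exact kernel count in the no case. The count of zero modes must be exactly \(1\) out of \(8\), not approximately, since \(q_d\) is an exact ratio of dimensions. Soundness alone only gives small energy bounds, so it must rule out exact zero modes in the six output-penalized labels. The argument I would use is that any exact zero mode of \(H_{\rm out}\) would be a history state accepting with probability \(1\), and this contradicts soundness. The other point to check is that \(H_{\rm rej}\) uses only rational projectors that are realizable by the gadget palette.
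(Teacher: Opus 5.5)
Your architecture matches the paper's. It uses eight labels: one always accepting, five conditioned on the verifier's output, and two always rejected. Your description of this split is inconsistent as written, but this is the version that the counts $6$ and $1$ force. The remaining layers are also the paper's: nested history Hamiltonians, the transfer theorem with $H_d(X_A)\cong V/W_A$, and the common-copy blow-up. Your no-case count is settled as in the paper. By positivity, $\ker(H_{\rm in}+H_{\rm rej})=\ker H_{\rm in}\cap\ker H_{\rm rej}=\cV\ker(I-M)$, and $M$ is diagonal in the label basis, so superpositions across labels are covered.

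\textbf{The gap: Step~2 does not work for the $H\otimes H$ propagation terms.} The transfer theorem does not accept arbitrary rational projectors. Each rank-one constituent must belong to a finite palette satisfying the local certificate. The certified palette consists of signed relabelings of $\ket{0}-\ket{1}$, $\ket{00}-\ket{11}$ and the three-term vector $\ket{00}-\ket{10}-\ket{11}$, together with guarded versions of these (tensored with selected basis states). General integer states are explicitly outside it. Every constituent of a direct Feynman--Kitaev term for $H\otimes H$ has the form $\ket{0}\ket{v}-\ket{1}(H\otimes H)\ket{v}$ on a clock bit and two work bits. These vectors have no basis-state tensor factor, because $H$ does not preserve basis states, so they are not in the palette. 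Rationality of $H\otimes H$ alone therefore does not make Step~2 work.

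The missing ingredient is Rudolph's gain/loss split. $H\otimes H$ becomes the edge $\sqrt2H$ on one work bit followed by $H/\sqrt2$ on the other, and each edge is enforced by two orthonormal three-term rational vectors. The history then carries amplitudes $s_t\in\{1,\sqrt2\}$ and normalization $Z\in[L,2L]$. Consequently the initial gap comes from a weighted path Poincar\'e inequality, not the textbook Kitaev bound, and the output compression is $(I-M)/Z$.

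\textbf{Smaller points.}
\begin{itemize}
\item In $\BQPone^{G_2}$ the no-case rejection probability is only inverse polynomial, so a fixed number of repetitions does not reach soundness $1/3$. The paper uses Rudolph's exact error reduction instead. Your argument does not actually need amplification: the counts $6$ and $1$ are exact for every $p_x<1$, and the two-operator bound with $\alpha=(1-p_x)/Z$ stays inverse polynomial.
\item Your label-controlled rejection terms, which you flagged, are fine. They are basis projectors on at most five qubits (one clock bit, three label bits, the output bit), and the same guard construction supplies them.
\item Homology and nesting alone do not carry the persistent rank through the blow-up. What does is that identical labeled copy blocks at both levels make the lifting isometries commute with inclusion.
\end{itemize}
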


The reduction has three layers, each of independent interest.

\begin{theorem}[Kernel transfer; \cref{thm:transfer}, informal]
\label{thm:intro-transfer}
Fix a finite palette of weighted clique gadgets satisfying finitely many exactly checkable local conditions (\cref{ass:local-certificate}). Attach \(t\) gadgets with private vertex weight \(\lambda\) to the register, simulating \(H=\sum_j\Pi_j\) with \(H\succeq g(P_V-P_{K})\), \(K=\ker H\). Then every unit chain \(x\) of the full clique complex satisfies
\begin{equation}
\label{eq:intro-concentration}
\dist(x,K)^2\le C\Bigl[t\lambda^2+\frac{\langle x,\Delta x\rangle}{g\lambda^{\kappa}}\Bigr],\qquad\kappa=4m+2,
\end{equation}
where \(m\) bounds the number of active qubits of a gadget and \(C\) depends only on the palette. Hence, for \(\lambda\le c\min\{t^{-1},\eta/\sqrt t\}\) and \(E=c'\eta^2g\lambda^\kappa\),
\[
\dim\ker\Delta=\dim K,\qquad\spec(\Delta)\cap(0,E)=\varnothing,
\]
including the case \(K=0\).
\end{theorem}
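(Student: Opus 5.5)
Theorem~\ref{thm:intro-transfer} has two parts: the concentration inequality~\eqref{eq:intro-concentration}, and then the spectral consequences. The consequences are short once the inequality holds, so the plan puts its effort into the inequality.

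\emph{Spectral consequences.} Suppose a unit vector \(x\) satisfies \(\langle x,\Delta x\rangle<E=c'\eta^2 g\lambda^\kappa\). Then~\eqref{eq:intro-concentration} gives
\[
\dist(x,K)^2\le C\bigl(t\lambda^2+c'\eta^2\bigr),
\]
and this is below \(1/2\) once \(\lambda\le c\eta/\sqrt t\) and \(c'\) is small. So no unit vector in the spectral subspace \(\Delta\) below \(E\) is orthogonal to the embedded copy of \(K\). Hence that subspace has dimension at most \(\dim K\).

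For the matching lower bound I would use the gadget construction itself. Each \(v\in K\) extends to a \(\Delta\)-harmonic chain: the gadget fillings of the earlier sections are exact, so the embedded kernel lifts to \(\dim K\) independent cycles that are not boundaries.

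Combining the two bounds, \(\dim\ker\Delta=\dim K\), there are no eigenvalues in \((0,E)\), and the case \(K=0\) is included. The condition \(\lambda\le c/t\) is used only to keep the perturbative error terms in the next step summable.

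\emph{Concentration inequality.} I would split the chain space orthogonally as \(x=x_R+\sum_j x_j\). Here \(x_R\) is supported on register simplices, and \(x_j\) on simplices containing private vertices of gadget \(j\). Since gadgets share only register vertices, \(\Delta\) decomposes as
\[
\Delta=\Delta_R+\sum_j \Delta_j
\]
up to cross terms controlled by the weight \(\lambda\).

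Under \cref{ass:local-certificate}, each gadget's local Laplacian is a fixed finite matrix on at most \(m\) active qubits. Checking the certificate finitely then yields two constants \(c_0,C_0\) depending only on the palette, giving the local inequality
\[
\langle y,\Delta_j y\rangle\ \ge\ c_0\lambda^{4m+2}\,\bigl(\langle y_R,\Pi_j y_R\rangle+\|y_j-\lambda\,\phi_j(y_R)\|^2\bigr)-C_0\lambda^2\|y\|^2 .
\]
In words, the energy of gadget \(j\) controls the projector \(\Pi_j\) on the register, and it also forces the private part to lie close to a canonical \(\lambda\)-small filling \(\phi_j\). The exponent \(\kappa=4m+2\) comes from the fact that the smallest nonzero singular value of a gadget scales as \(\lambda^{2m+1}\) in the private weight; this is established by a Schur-complement or degree count on the finitely many palette entries.

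Summing over \(j\) gives
\[
\langle x_R,Hx_R\rangle\ \lesssim\ \lambda^{-\kappa}\langle x,\Delta x\rangle+t\lambda^2 .
\]
The gap hypothesis \(H\succeq g(P_V-P_K)\) then bounds \(\|x_R-P_Kx_R\|^2\). The private mass \(\sum_j\|x_j\|^2\) is at most \(O(t\lambda^2)\) plus energy terms, since the fillings \(\phi_j\) carry weight \(\lambda\). Adding the two contributions gives~\eqref{eq:intro-concentration}.

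\emph{Main obstacle.} The hardest part is the uniform local inequality. The cross terms between different gadgets, which share register simplices, must not accumulate with the exponentially large padded register dimension. I would first try a Cauchy–Schwarz argument in which each register simplex is charged only to the gadgets touching it. This is a bounded number per simplex by the palette's locality, so the constant \(C\) stays independent of register size. The bound is then linear in \(t\), not in \(\dim V\).
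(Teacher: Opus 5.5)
Your spectral-consequence step is the paper's argument: $P_K$ is injective on the spectral subspace below $E$, and $\dim K$ zero modes come from $V/W_A\hookrightarrow H_d(X_A)$. The lower bound does need the full identity $B_d(X_A)\cap V=W_A$ (the reverse inclusion in certificate~(i) plus disjoint private supports), not merely that fillings exist.

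The genuine gap is in the concentration inequality. Your local inequality puts the additive error $-C_0\lambda^2\|y\|^2$ beside a coercive term weighted by $c_0\lambda^{\kappa}$. Summing over $j$ and dividing by $c_0\lambda^\kappa$ gives $\langle x_R,Hx_R\rangle\le C\bigl(\lambda^{-\kappa}\langle x,\Delta x\rangle+t\lambda^{2-\kappa}\bigr)$, not $+\,t\lambda^2$, and this is vacuous for small $\lambda$. Even granting your summed bound, the gap hypothesis only yields $\|(P_V-P_K)x_R\|^2\le C(t\lambda^2+\lambda^{-\kappa}e)/g$, with $e=\langle x,\Delta x\rangle$. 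The structural term is then $t\lambda^2/g$, so $\lambda$ would have to depend on $g$. That is exactly what the stated inequality and the choice $\lambda\le c\min\{t^{-1},\eta/\sqrt t\}$ avoid. You also never control the non-cycle register part $(I-P_V)x_R$, since $H$ sees only $V\subsetneq C_d(R)$.

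The missing idea is the paper's two-energy split. Because $K\subseteq V$, one has exactly $\|(I-P_K)x\|^2=\|(I-P_V)x\|^2+\langle x,(P_V-P_K)x\rangle$.
\begin{itemize}
\item The logical term is controlled by the up-Laplacian through an operator inequality with \emph{no} additive error. The register is top-dimensional, so $\Delta^\uparrow$ is exactly the sum of the embedded gadget up-Laplacians. Each has kernel $B_d(Y_j)^\perp\subseteq(\ran\Pi_j)^\perp$ and positive spectrum at least $c\lambda^\kappa$ by the padded local gap. Hence $\Delta^\uparrow\succeq c\lambda^\kappa H^{\mathrm{emb}}$ and $\langle x,(P_V-P_K)x\rangle\le e/(cg\lambda^\kappa)$.
\item The leakage $\|(I-P_V)x\|^2$ is bounded separately by $C[t\lambda^2+(t+\lambda^{-2})e]$. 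This uses certificate~(ii), a zero-weight kernel $V\oplus Q$ with a constant floor on its complement, and~(iii), first-order coercivity $\|T_1q\|\ge b\|q\|$ on $Q$.
\end{itemize}
Only in the leakage bound does $t\lambda^2$ arise, and it is never divided by $g\lambda^\kappa$.

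Your treatment of cross terms would also fail. Private vertices of every gadget are joined to all vertices of the unaffected register factors. Gadgets with disjoint active supports therefore share register output faces, and a single register $(d-1)$-simplex can receive $L_j$-outputs from $\Theta(t)$ gadgets, for instance clock penalties on disjoint clock pairs. Locality does not bound the charging number.

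The paper needs no such bound. It writes the global register boundary as $r=a+y$ with $y=\sum_jL_j\omega_j$ and compares each local boundary through $a+y_j=r-(y-y_j)$. It then uses only $\|[L_1\ \cdots\ L_t]\|^2=\|\sum_jL_jL_j^*\|\le N_*t\lambda^2$, which holds for overlapping outputs. The resulting $Ct^2\lambda^2\|(I-P_V)x\|^2$ is absorbed by $\lambda\le c_0/t$; that is the actual role of this hypothesis.

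Likewise, a padded gadget's Laplacian is not a fixed finite matrix, and uniformity over the exponentially large padded chain space does not come from a per-simplex count. It comes from the tensor-sum structure of the join in \cref{lem:scaling-padding}. The outside factors have harmonics only in top degree, so the padded zero-weight kernel is $V\oplus(Q_a\otimes\cH_{\mathrm{out}})$ and every other bidegree sits above a constant floor.
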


\Cref{eq:intro-concentration} concerns arbitrary geometric chains, not only vectors of the simulated register, and its constants depend only on the palette. Its shape matters. The structural error \(t\lambda^2\) does not involve \(g\), so the gadget scale \(\lambda\) is chosen from the number of gadgets alone, independently of the logical gap, and the resulting geometric gap \(E\) is linear in \(g\). A direct use of a many-gadget inequality in which \(\lambda\) is proportional to \(g\) would instead produce a high power of \(g\). For the palette used here \(m=6\), and a conservative choice gives \(E=\Omega(\eta^{28}g/t^{26})\); the exponent is poor, but the bound is inverse polynomial.

\begin{theorem}[Quotient naturality; \cref{thm:quotient}]
\label{thm:intro-quotient}
Under the hypotheses of \cref{thm:intro-transfer}, \(H_d(X_A)\cong V/W_A\) with \(W_A=\sum_{j\in A}\ran\Pi_j=K_A^\perp\), and for \(A\subseteq B\) the inclusion \(X_A\subseteq X_B\) induces the natural quotient map \(V/W_A\twoheadrightarrow V/W_B\). In particular \(\rank[H_d(X_A)\to H_d(X_B)]=\dim K_B\).
\end{theorem}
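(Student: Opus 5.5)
We argue at the level of chains and cycles, using the kernel transfer theorem only to identify the degree-\(d\) homology. Write \(C_d(X_A)\) for the chain space of the clique complex with gadgets \(A\) attached, and \(Z_d(X_A)\) for its cycles. The register subcomplex (no gadgets) has a cycle space containing \(V\), one cross-polytope-type cycle per computational basis state. Since the register complex has no \((d+1)\)-simplices, its \(H_d\) is exactly \(V\).

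\textbf{Step 1: surjectivity.} We first show that the composite \(V\hookrightarrow Z_d(X_A)\to H_d(X_A)\) is onto. Every harmonic class of \(X_A\) lies in \(\ker\Delta_{X_A}\). By \cref{thm:transfer}, applied with term set \(A\), this kernel has dimension \(\dim K_A\). Moreover every unit kernel vector \(x\) satisfies \(\dist(x,K_A)^2\le Ct\lambda^2\), which is small. Here we use that the embedded \(K_A\) lies in the image of \(V\) under the embedding.

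Small distance alone does not give equality of subspaces. So we instead show directly that every \(d\)-cycle of \(X_A\) is homologous to a register cycle. Each gadget is a local cone-like construction on private vertices. The local certificate conditions (\cref{ass:local-certificate}) include that the gadget complex, relative to its register boundary, has vanishing relative \(H_d\). By a Mayer--Vietoris (or relative-sequence) induction over the gadgets in \(A\), any cycle can be pushed off the private vertices of each gadget in turn, modulo boundaries. This gives surjectivity.

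\textbf{Step 2: the kernel of \(V\to H_d(X_A)\) is \(W_A\).} For the inclusion \(W_A\subseteq\) kernel, we use the exact fillings. For each \(j\in A\) and each \(v\in\ran\Pi_j\), the gadget \(j\) supplies a \((d+1)\)-chain whose boundary is the embedded \(v\). These fillings are certified exactly by the gadget design: the weights realize the projector's range. Summing them gives \(W_A=\sum_{j\in A}\ran\Pi_j\subseteq\) kernel.

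For the reverse inclusion we use a dimension count. Surjectivity gives \(\dim H_d(X_A)=\dim V-\dim(\text{kernel})\le\dim V-\dim W_A\). Since \(W_A=K_A^\perp\) inside \(V\) (as \(\ker\sum_{j\in A}\Pi_j=\bigcap_{j\in A}\ker\Pi_j\)), the right-hand side equals \(\dim K_A\). Hodge theory gives \(\dim H_d(X_A)=\dim\ker\Delta_{X_A}\), and \cref{thm:transfer} gives this equals \(\dim K_A\). So equality holds throughout, the kernel is exactly \(W_A\), and \(H_d(X_A)\cong V/W_A\).

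\textbf{Step 3: naturality and rank.} For \(A\subseteq B\), the inclusion of chain complexes fixes the register cycles \(V\) pointwise. The square formed by the two surjections \(V\to H_d(X_A)\), \(V\to H_d(X_B)\) and the induced map therefore commutes. Hence the induced map is the quotient map \(V/W_A\to V/W_B\), which is well defined since \(W_A\subseteq W_B\). It is surjective, so its rank is \(\dim V/W_B=\dim K_B\).

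The main obstacle is the surjectivity in Step 1. It requires a purely topological statement about gadgets: their private parts contribute no new \(d\)-cycles and no extra relations beyond \(\ran\Pi_j\). The gadgets overlap on register vertices, so this must be checked carefully. If the Mayer--Vietoris route proves delicate, a fallback is to deduce surjectivity from the spectral bound. Project \(\ker\Delta_{X_A}\) onto the register cycles; the bound \(\dist(x,K_A)^2\le Ct\lambda^2<1\) makes this projection injective. We would then need to show it maps into a complement of \(W_A\) modulo boundaries. That comparison is more analytic and less clean.
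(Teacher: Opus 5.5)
There is a genuine gap in Step 1. You say that \cref{ass:local-certificate} includes vanishing relative $H_d$ of each gadget over its register, but it does not. Conditions (i)--(iv) concern the exact boundary intersection, the zero-weight kernel, private coercivity, and private outputs. The paper deliberately avoids any relative-acyclicity hypothesis: extra homology supported on private vertices is ruled out only analytically, through $\dim\ker\Delta_A=\dim K_A$ in \cref{thm:transfer}.

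Your relative decomposition $C_*(X_A,R)=\bigoplus_j C_*(Y_j,R)$ is fine, but you have no independent reason for $H_d(Y_j,R)=0$. That vanishing is true a posteriori, from the conclusion together with $\widetilde H_{d-1}(R)=0$. It is a consequence of the spectral dimension count, not an input you can use to prove it. Step 2 inherits the gap, since you obtain the reverse inclusion $\ker\subseteq W_A$ only from surjectivity.

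The fix is to prove the other direction directly. Condition (i) is the two-sided equality $B_{d_a}(Y_a)\cap V_a=\ran\pi_a$, which \cref{lem:scaling-padding} pads to $\ran\Pi_j$. Its nontrivial inclusion gives injectivity at once:
\begin{enumerate}
\item Take a $(d+1)$-chain whose boundary lies in $V$. Every $(d+1)$-simplex contains a private vertex, because $R$ is top-dimensional, so the chain splits into gadget components.
\item Private $d$-simplices of distinct gadgets are distinct, so each component's boundary has zero private part.
\item Hence each component's boundary lies in $B_d(Y_j)\cap V=\ran\Pi_j$.
\end{enumerate}
This gives $B_d(X_A)\cap V=W_A$, which is \cref{lem:independent-filling}, so $V/W_A\to H_d(X_A)$ is injective. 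Its domain has dimension $\dim K_A=\dim\ker\Delta_A=\dim H_d(X_A)$, so it is also surjective.

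This is the paper's argument. It makes both the Mayer--Vietoris push-off and your spectral-projection fallback unnecessary. With the isomorphism established this way, your Step 3 on naturality and rank is correct as written.
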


\Cref{thm:intro-transfer,thm:intro-quotient} give exactly the two properties that the kernel-realization conjecture~\cite[Conjecture~2]{lowe2026normalized} requests, kernel fidelity with computable gaps and functoriality along the filtration, for the exact history Hamiltonians treated here; \cref{sec:related} states the precise comparison. Functoriality is obtained differently from the way it is formulated there. We never choose harmonic representatives compatibly across the filtration, which need not be possible since harmonic representatives rotate; instead, exact fillings identify each homology group with a quotient of the fixed register cycle space, and the induced map is then forced to be the natural quotient.

The third layer is an exact source. A three-bit maximally mixed label turns a \(\BQPone^{G_2}\) verifier with acceptance probability \(p_x\) into a verifier whose compressed acceptance operator on the label is \(\operatorname{diag}(1,p_x,p_x,p_x,p_x,p_x,0,0)\). Its perfect eigenspace has dimension \(6\) if \(p_x=1\) and \(1\) if \(p_x\le1/3\), which fixes \(\beta_d(\Xin)=8\) and the values \(3/4\) and \(1/8\). Finally, the common-copy blow-up~\cite{hayakawa2026unweighted}, applied with the same labeled copy blocks at both levels, transfers Betti numbers, persistent rank, and gaps to unweighted graphs.

\paragraph{Technical overview.}
Four ideas carry the proof of \cref{thm:intro-transfer}.

\emph{Finite certificates and the zero-weight limit.} In normalized coordinates the boundary pair of a gadget with private weight \(\lambda\) is affine in \(\lambda\), \(D(\lambda)=D_0+\lambda D_1\), and \(D_0\) is a fixed rational matrix. The certificate (\cref{ass:local-certificate}) asks for four exact facts about each palette member: the boundary intersection \(B_{d_a}(Y_a)\cap V_a\) equals the range of the intended rank-one projector (exact filling); \(\ker D_0=V_a\oplus Q_a\) for an explicit private subspace \(Q_a\); a projected private differential is coercive on \(Q_a\) at first order in \(\lambda\); and its retained outputs are private. All four are integer or modular linear algebra, and \cref{app:palette} records them for the palette of this paper.

\emph{Harmonic padding.} A certificate speaks about a constant-size active register, while the actual gadget is joined with the unaffected register factors, so the padded complex is exponentially large. The join Laplacian is a tensor sum over bidegrees, and the unaffected factors have harmonics only in top degree. We keep the local private data tensored with these outside harmonics and place every nonharmonic outside mode in a sector with a constant gap (\cref{lem:scaling-padding}). This is the step where the constants become independent of the padded dimension. A bulk consisting of all coordinates that contain a private vertex would not do, because deleting a weight-one outside vertex stays in that bulk with coefficient one.

\emph{Two energies.} The proof of \cref{eq:intro-concentration} separates leakage out of the register from logical energy. Leakage is controlled by the zero-weight floor, the private coercivity, and an interface bound \(\|[L_1\cdots L_t]\|^2=\|\sum_jL_jL_j^*\|\le Ct\lambda^2\) that needs no orthogonality between the register outputs of different gadgets. Logical energy is controlled by exact filling through \(\Delta^\uparrow\succeq c\lambda^\kappa H^{\rm emb}\). The two contributions carry different powers of \(\lambda\), which is what decouples \(\lambda\) from \(g\).

\emph{From estimate to exact multiplicity.} Every vector of the spectral subspace of \(\Delta\) below \(E\) projects nontrivially onto \(K\), so that subspace has dimension at most \(\dim K\). Exact fillings inject \(V/W_A\) into \(H_d(X_A)\), providing \(\dim K\) zero modes. Equality forces both the exact multiplicity and the absence of eigenvalues in \((0,E)\); no relative-acyclicity assumption is needed.

\paragraph{Scope.}
The hardness is gate-dependent, as in the known perfect-completeness hardness results for topological data analysis~\cite{crichigno2024clique,king2026gapped,gyurik2026provable}. We do not obtain ordinary \(\BQP\)-hardness, unrestricted \(\SDQCone\)-hardness, or gate-independent \(\BQPone\)-hardness, and complex phases and general integer states are outside the certified palette. The initial Betti number of our instances is the constant \(8\), produced by three mixed label qubits. The gap is a complexity promise rather than a practical estimate, and the common-copy blow-up is polynomial but large. \Cref{sec:limitations} discusses these points and the open problems they suggest.

\paragraph{Organization.}
\Cref{sec:prelim} fixes conventions and the problem. \Cref{sec:transfer} states the certificate model and the transfer theorem, \cref{sec:quotient} derives quotient naturality, \cref{sec:circuit} instantiates the history construction with a certified palette and proves \cref{thm:weighted-hardness}, and \cref{sec:unweighting} transfers the result to unweighted graphs. \Cref{sec:related} compares with prior work and \cref{sec:limitations} concludes. The appendices contain the complete transfer proof, the history-gap and source calculations, and the finite palette certificates.

\section{Preliminaries and problem statement}
\label{sec:prelim}

\paragraph{Weighted clique complexes.}
All chain spaces are finite-dimensional over \(\C\).  A vertex-weighted
simplicial complex has positive vertex weights \(w(v)\), and a simplex has
weight \(w(\sigma)=\prod_{v\in\sigma}w(v)\).  We use the normalized
coordinates of King and Kohler~\cite{king2026gapped}, in which the weighted
boundary of an oriented simplex removes a vertex \(v\) with coefficient
\(\pm w(v)\); in matrix form \(\partial_{w,k}=W_{k-1}^{-1}\partial_{1,k}W_k\),
where \(W_k\) is the diagonal matrix of \(k\)-simplex weights.  The weighted
complex is therefore conjugate to the unweighted one and has the same
homology.  We use
\[
\Delta_{X,d}=\partial_{X,d}^*\partial_{X,d}
+\partial_{X,d+1}\partial_{X,d+1}^*,\quad
Z_d(X)=\ker\partial_{X,d},\quad
B_d(X)=\ran\partial_{X,d+1}.
\]
Finite-dimensional Hodge theory~\cite{eckmann1944harmonische,horak2013spectra}
gives the orthogonal decomposition
\(C_d(X)=\cH_d(X)\oplus B_d(X)\oplus\ran\partial_{X,d}^*\) with
\(\cH_d(X)=\ker\Delta_{X,d}\) and \(Z_d(X)=\cH_d(X)\oplus B_d(X)\), and
identifies \(H_d(X)=Z_d(X)/B_d(X)\) with the harmonic space \(\cH_d(X)\).
For a positive semidefinite operator \(A\), write
\(\gamma_+(A)=\min(\spec(A)\setminus\{0\})\) for its smallest positive
eigenvalue, with \(\gamma_+(0)=+\infty\).
All certified boundary matrices and fillings below are rational.  Rank, kernel
dimension, image intersection, and filling validity are unchanged under scalar
extension \(\mathbb Q\subset\mathbb R\subset\mathbb C\), and the real
positive-semidefinite inequalities complexify verbatim.  We state homology
over \(\mathbb C\) while certifying the finite palette over \(\mathbb Q\).

\paragraph{Normalized persistence.}
For \(X_1\subseteq X_2\), define
\begin{equation}
\label{eq:true-ratio}
\beta_d(X_1\to X_2)
=\rank[H_d(X_1)\to H_d(X_2)],\qquad
q_d(X_1,X_2)
=\frac{\beta_d(X_1\to X_2)}{\beta_d(X_1)}.
\end{equation}
Lowe et al.~\cite{lowe2026normalized} write the numerator through harmonic
projectors, as the rank of \(P_{\cH_d(X_2)}\) restricted to \(\cH_d(X_1)\).
The two formulations agree.  For a cycle \(z\in Z_d(X_2)\), the vector
\(P_{\cH_d(X_2)}z\) is the harmonic representative of \([z]\), because
\(z-P_{\cH_d(X_2)}z\in B_d(X_2)\); hence \(z\mapsto P_{\cH_d(X_2)}z\) induces
the isomorphism \(H_d(X_2)\cong\cH_d(X_2)\).  Since
\(\cH_d(X_1)\subseteq Z_d(X_1)\subseteq Z_d(X_2)\) and
\(\cH_d(X_1)\cong H_d(X_1)\), the operator \(P_{\cH_d(X_2)}|_{\cH_d(X_1)}\) is
the induced map \(H_d(X_1)\to H_d(X_2)\) up to these identifications, and its
rank is \(\beta_d(X_1\to X_2)\).

\begin{definition}[Gapped normalized persistence]
\label{def:problem}
An instance of the weighted problem consists of explicit simple graphs
\(G_1\subseteq G_2\) on one labeled vertex set \(V\), one common positive
dyadic weight function \(w:V\to\mathbb Q_{>0}\), a degree \(d\ge1\), positive
rational lower bounds \(\gamma_1,\gamma_2\), and a positive rational accuracy
\(\varepsilon\), all encoded in binary and all at least \(1/\poly(|V|)\).
Put \(X_i=\Cl(G_i)\), using the
weighted-boundary convention above.  The promises are
\[
\beta_d(X_1)>0,
\qquad
\spec(\Delta_{X_i,d})\subseteq\{0\}\cup[\gamma_i,\infty)
\quad(i=1,2).
\]
The task is to output \(\widetilde q\) such that
\(\Pr[|\widetilde q-q_d(X_1,X_2)|\le\varepsilon]\ge2/3\).  Input size is the
total bit length of the graph encodings, weights, \(d\), the two gap bounds,
and \(\varepsilon\).  The unweighted problem is the restriction \(w\equiv1\).
Its fixed decision version distinguishes \(q_d\ge3/4\) from \(q_d\le1/8\).
\end{definition}
This is Problem~9 of Lowe et al.~\cite{lowe2026normalized}, with the same
promises, including their inverse-polynomial lower bounds on the gap and
accuracy parameters, and with the encoding conventions fixed.  The denominator is the
initial Betti number, not the number of simplices, and the statistic counts
exact zero modes rather than eigenvalues below a cutoff.

\paragraph{Register, logical projectors, and gadgets.}
The construction starts from a top-dimensional \emph{register} complex
\(R\), so \(C_{d+1}(R)=0\), and sets \(V=Z_d(R)=\cH_d(R)\).  Concretely,
\(R\) is the clique complex of the join of \(n\) \emph{bowtie} graphs, one
per logical qubit.  A bowtie is the wedge of two \(4\)-cycles; its clique
complex is one-dimensional and connected with \(\beta_1=2\), and the two
normalized oriented petal cycles are the computational basis
\(\ket0,\ket1\).  By the K\"unneth formula for joins, \(R\) has reduced
homology only in degree \(d=2n-1\), and \(V\cong(\C^2)^{\otimes n}\)
isometrically.  Each constraint \(j\) is a rank-one projector \(\pi_j\) on
the cycle space of a constant-size \emph{active} sub-register; after
tensoring with the identity of the unaffected factors it becomes an
orthogonal projector \(\Pi_j\) on \(V\), generally of higher rank.  For a
term set \(A\), put
\begin{equation}
\label{eq:logical-sum}
H_A=\sum_{j\in A}\Pi_j,\qquad
K_A=\ker H_A,\qquad
W_A=\sum_{j\in A}\ran\Pi_j=K_A^\perp ,
\end{equation}
and use a certified logical gap
\begin{equation}
\label{eq:logical-gap}
H_A\succeq g_A(P_V-P_{K_A}),\qquad 0<g_A\le1.
\end{equation}

Each constraint is realized by a weighted clique gadget: a graph containing
the active register, whose additional \emph{private} vertices have weight
\(\lambda\) and are joined to every vertex of the unaffected register
factors, while all register vertices have weight one.  Attaching the gadgets
of all \(j\in A\) to \(R\) gives the clique complex \(X_A\), and \(\Delta_A\)
denotes its full Laplacian on all degree-\(d\) chains.  Thus \(H_A\) acts on
\(V\), while \(\Delta_A\) acts on \(C_d(X_A)\supseteq V\).

\begin{definition}[Independent private interiors]
\label{def:independent}
The gadgets of a term set \(A\) have \emph{independent private interiors}
if private vertices of distinct gadgets are never adjacent.  Every simplex
of \(X_A\) that contains a private vertex then contains private vertices of
exactly one gadget, and
\(C_d(X_A)=C_d(R)\oplus\bigoplus_{j\in A}C^{\rm priv}_{d,j}\) as coordinate
spaces.
\end{definition}
All constructions in this paper satisfy \cref{def:independent}: distinct
gadgets receive disjoint sets of new vertices, and no edge is added between
private vertices of distinct gadgets.

\paragraph{Circuit source.}
For the circuit application, \(J\) inserts the fixed clean string around a
maximally mixed logical input, \(U\) is an exact circuit, and
\[
M=J^*U^*P_{\rm acc}UJ,\qquad S=\ker(I-M).
\]
The initial history kernel will have the full input dimension, while the
final history kernel will be isomorphic to the perfect accepting space
\(S\).

\section{Finite-certificate transfer to the complete geometric kernel}
\label{sec:transfer}

We isolate finite local conditions under which a clique-gadget construction
controls every chain of the full geometric complex, including chains
supported mostly outside the simulated register.  The statement does not use
convergence of a spectral projector whose rank grows with the simulated
circuit.

\subsection{The certificate model}

Consider one active gadget before padding.  Its weight-one register is \(R_a\),
its target cycle space is \(V_a=Z_{d_a}(R_a)\), and every private vertex has
weight \(\lambda\).  In normalized chain coordinates put
\[
D_a(\lambda)=
\begin{bmatrix}\partial_{d_a}(\lambda)\\
\partial_{d_a+1}(\lambda)^*\end{bmatrix}
=D_{a,0}+\lambda D_{a,1},
\]
so that \(\Delta_{Y_a,d_a}(\lambda)=D_a(\lambda)^*D_a(\lambda)\).  The
decomposition is affine in \(\lambda\) because every entry of the normalized
boundary is \(0\), \(\pm1\) (a register vertex removed) or \(\pm\lambda\) (a
private vertex removed).  The value at \(\lambda=0\) is a matrix limit, not
a singular change of basis; \(D_{a,0}\) is the boundary pair of a rational
chain complex, since \(\partial(\lambda)^2=0\) for all \(\lambda\) forces
\(\partial_0^2=0\).

\begin{assumption}[Finite local certificate]
\label{ass:local-certificate}
For every member \(a\) of a fixed finite palette there are an orthogonal
subspace \(Q_a\) contained in the span of private simplex coordinates, an
output projector \(R_a^{\rm out}\), and a rank-one
projector \(\pi_a\) on \(V_a\) such that:
\begin{enumerate}[(i)]
\item \(B_{d_a}(Y_a)\cap V_a=\ran\pi_a\) for every \(\lambda>0\);
\item \(\ker D_{a,0}=V_a\oplus Q_a\), orthogonally;
\item for \(T_a(\lambda)=R_a^{\rm out}D_a(\lambda)\),
\[
T_a(\lambda)=\lambda T_{a,1},\qquad T_{a,1}V_a=0,\qquad
\|T_{a,1}q\|\ge b\|q\|\quad(q\in Q_a);
\]
\item every retained output coordinate contains a private vertex.
\end{enumerate}
The constant \(b>0\), the positive singular-value floor of \(D_{a,0}\), and
all matrix norms are uniform over the palette.
\end{assumption}

All data in \cref{ass:local-certificate} are finite integer or rational linear
algebra.  Exact fillings prove one inclusion in (i); matching boundary ranks
and surviving register cosets, or the guard connecting-map argument, prove the
reverse inclusion.  Condition (i) does not depend on \(\lambda\): the
weighted boundary space is \(W_{d_a}^{-1}B_{d_a}(Y_a;1)\) and \(W_{d_a}\)
is the identity on register coordinates.  Modular rank bounds with matching
rational kernels certify (ii), and a nonzero Gram determinant certifies
(iii).  The subspace \(Q_a\) need not be spanned by simplex coordinates; the
selected-cycle guards of \cref{app:palette} produce \(Q_a\) of the form
\(Q\otimes\C\gamma\).  The explicit palette used later is recorded in
\cref{app:palette}.

\begin{lemma}[Weight scaling and harmonic padding]
\label{lem:scaling-padding}
Let \(0<\lambda\le1\), let \(d_a=2m_a-1\) with \(m_a\le m\), and put
\(\kappa:=4m+2\).  There is a palette constant \(c>0\) such that
\begin{equation}
\label{eq:local-gap}
\gamma_+\!\left(\Delta_{Y_a,d_a}(\lambda)\right)
\ge c\lambda^\kappa .
\end{equation}
Let \(Y_a\) be joined with the unaffected register factors, whose join
\(R_{\rm out}\) has reduced homology only in its top degree, with top
harmonic space \(\cH_{\rm out}\), and let \(d\) be the global target degree.
Then the zero-weight kernel of the padded gadget in degree \(d\) is
\[
V\oplus\bigl(Q_a\otimes\cH_{\rm out}\bigr),
\]
its orthogonal complement has a uniform positive zero-weight floor, the
projected pair \(T_a(\lambda)\otimes P_{\cH_{\rm out}}\) satisfies (iii)
with the same constant \(b\), and the padded exact boundary intersection
with \(V\) is the range of
\(\Pi_a=\pi_a\otimes I_{\cH_{\rm out}}\).
\end{lemma}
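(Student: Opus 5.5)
We treat the two halves of the statement separately: the local gap \eqref{eq:local-gap}, and then the padding claims.

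\emph{Local gap.} We work with the affine family \(D_a(\lambda)=D_{a,0}+\lambda D_{a,1}\). Let \(P_0\) be the orthogonal projector onto \(\ker D_{a,0}=V_a\oplus Q_a\) from \cref{ass:local-certificate}(ii), and let \(\sigma_0\) be the uniform singular-value floor of \(D_{a,0}\). For \(x\perp\ker\Delta_{Y_a}(\lambda)\) we split \(x=P_0x+(I-P_0)x\).

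The complement part gives \(\|D_a(\lambda)x\|\ge\sigma_0\|(I-P_0)x\|-\lambda\|D_{a,1}\|\|x\|\).

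On the kernel part we use (iii). Since \(R^{\rm out}_a D_a(\lambda)=\lambda T_{a,1}\) and \(T_{a,1}V_a=0\), we get \(\|D_a(\lambda)x\|\ge \lambda b\|P_{Q_a}x\|-\lambda\|T_{a,1}\|\|(I-P_0)x\|\). The remaining kernel component lies in \(V_a\). By (i), the harmonic space meets \(V_a\) in \(\ker\pi_a\), so the \(V_a\)-component of a vector orthogonal to the harmonics is controlled by \(\ran\pi_a\).

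That direction is only lifted through the filling. The filling chain has private vertices, and its normalized boundary is \(\lambda^{-(\text{number of private vertices})}\) times a register cycle. This is where the high power of \(\lambda\) appears. A filling of a \(d_a\)-cycle in the join of at most \(m\) bowties uses at most \(2m+1\) private vertices, so the coercivity of \(\Delta^\uparrow\) on \(\ran\pi_a\) is of order \(\lambda^{2(2m+1)}=\lambda^{4m+2}=\lambda^\kappa\).

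We then combine the three estimates by a standard three-block case analysis: whichever component dominates gives its bound, and cross terms are absorbed by a choice of constants. This gives \(\gamma_+\ge c\lambda^\kappa\) for \(\lambda\le\lambda_0\). For \(\lambda\in[\lambda_0,1]\) we use compactness together with the \(\lambda\)-independence of the homology (the rank of \(\Delta\) is constant for \(\lambda>0\)), so the positive eigenvalues are bounded below by continuity. The main obstacle is making the \(V_a\)-direction bound quantitative with exactly the exponent \(\kappa\). I would first try a variational argument: test \(\langle x,\Delta x\rangle\) against the explicit rational filling to produce a lower bound of the form \(\lambda^\kappa\|\pi_a x\|^2\).

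\emph{Padding.} We use the join tensor structure. Degree-\(d\) chains of \(Y_a * R_{\rm out}\) decompose over bidegrees \((p,q)\) with \(p+q+1=d\), and at \(\lambda=0\) the boundary pair is the tensor sum \(D_{a,0}\otimes I\pm I\otimes D_{\rm out}\) with the Koszul sign. Hence the zero-weight Laplacian is \(\Delta_{a,0}\otimes I+I\otimes\Delta_{\rm out}\) on each bidegree.

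The kernel of this operator is \(\ker\Delta_{a,0}\otimes\cH_{\rm out}\). Since \(\cH_{\rm out}\) sits only in top degree, only the bidegree \((d_a,\text{top})\) survives, and that sector's kernel is \((V_a\oplus Q_a)\otimes\cH_{\rm out}=V\oplus(Q_a\otimes\cH_{\rm out})\). Outside the kernel, the eigenvalues are sums \(\mu+\nu\) with \(\mu\) an eigenvalue of \(\Delta_{a,0}\) and \(\nu\) an eigenvalue of \(\Delta_{\rm out}\), not both zero. This gives a floor \(\min(\sigma_0^2,\gamma_+(\Delta_{\rm out}))\). The factor \(\gamma_+(\Delta_{\rm out})\) is uniform because \(R_{\rm out}\) is a join of bowties, whose Laplacian spectrum is sums of constant bowtie spectra. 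Note that \(\gamma_+(\Delta_{\rm out})\) here depends on \(R_{\rm out}\) but not on the padded dimension.

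For (iii), \((R^{\rm out}_a\otimes P_{\cH_{\rm out}})D(\lambda)\) kills the \(I\otimes D_{\rm out}\) piece on harmonic outside vectors, leaving \(\lambda T_{a,1}\otimes P_{\cH_{\rm out}}\) with the same constant \(b\). Finally, the padded boundary intersection follows from Künneth for the pair: \(B_d\cap(V_a\otimes\cH_{\rm out})=(B_{d_a}(Y_a)\cap V_a)\otimes\cH_{\rm out}\), using \(B(R_{\rm out})\perp\cH_{\rm out}\). This equals \(\ran(\pi_a\otimes I)\).
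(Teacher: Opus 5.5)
Your padding half matches the paper's argument: tensor-sum Laplacian over the augmented join bidegrees, outside harmonics only in the \((d_a,\mathrm{top})\) bidegree, outside differentials killed by \(P_{\cH_{\rm out}}\), and K\"unneth for the boundary intersection.

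\textbf{The gap: the claim about the harmonic space is false.} The problem is in the local estimate \eqref{eq:local-gap}. Your step ``the harmonic space meets \(V_a\) in \(\ker\pi_a\), so the \(V_a\)-component of a vector orthogonal to the harmonics is controlled by \(\ran\pi_a\)'' fails for \(\lambda>0\). A register cycle \(v\) is harmonic only if \(\partial^*v=0\). For a coface \(\tau=p*\sigma\) with a single private vertex \(p\), one has \((\partial^*v)_\tau=\pm\lambda v_\sigma\). So any \(v\in V_a\ominus\ran\pi_a\) that touches a register simplex with a private coface is not harmonic; for example \(\ket0+\ket1\) for the one-qubit atom. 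For such a \(v\), the vector \(x=v-P_{\cH}v=P_Bv\) is orthogonal to \(\cH(\lambda)\), yet \(\langle v,x\rangle=\|x\|^2\neq0\). Hence its \(V_a\)-component has a nonzero part in \(V_a\ominus\ran\pi_a\). That part is invisible to all three of your estimates: it lies in \(\ker D_{a,0}\), is annihilated by \(T_{a,1}\), and is orthogonal to the filled direction. No choice of constants in a three-block case analysis can absorb a component that none of the bounds sees. Your filling bound itself is fine: for \(v=\partial c_\lambda\in\ran\pi_a\), \(\langle x,\Delta x\rangle\ge\|\partial^*x\|^2\ge|\langle v,x\rangle|^2/\|c_\lambda\|^2\), and \(\|c_\lambda\|\le\lambda^{-(d_a+2)}\|c_1\|\) because the weighted filling is \(W_{d_a+1}^{-1}c_1\).

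\textbf{How to close it, and the paper's simpler route.} Along your route, the missing step is a global dimension count.
\begin{enumerate}
\item Combine your estimates into \(\dist(x,V_a\ominus\ran\pi_a)^2\le C(\lambda^2\|x\|^2+\lambda^{-\kappa}\langle x,\Delta x\rangle)\).
\item Conclude that projection onto \(V_a\ominus\ran\pi_a\) is injective on the spectral subspace below \(c\lambda^\kappa\) once \(\lambda\le\lambda_0\).
\item Use (i) to get \(\beta_{d_a}(Y_a)\ge\dim V_a-1\), so equality excludes eigenvalues in \((0,c\lambda^\kappa)\). This is the mechanism of the transfer theorem's last step.
\end{enumerate}
The paper avoids all of this. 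In normalized coordinates \(\partial_{w,k}=W_{k-1}^{-1}\partial_{1,k}W_k\), with \(\sigma_{\min}(W_{k-1}^{-1})\ge1\) and \(\sigma_{\min}(W_k)\ge\lambda^{k+1}\). For \(x\perp\ker\partial_{w,k}=W_k^{-1}\ker\partial_{1,k}\) one has \(\dist(W_kx,\ker\partial_{1,k})\ge\lambda^{k+1}\|x\|\). Hence \(\sigma^+_{\min}(\partial_{w,k})\ge\lambda^{k+1}\sigma^+_{\min}(\partial_{1,k})\). The Hodge splitting into down and up sectors then gives \(c\lambda^{2(d_a+2)}=c\lambda^{4m_a+2}\) on all of \((0,1]\). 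That argument uses no certificate data, does not need to know \(H_{d_a}(Y_a)\), and needs no threshold \(\lambda_0\) or compactness step. It is the same vertex count you used, applied to the whole boundary operator rather than to one filled vector.
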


\begin{proof}
See \cref{app:local-padding}.
\end{proof}

The harmonic restriction in this lemma is essential.  If the padded bulk
contained every coordinate with a local private vertex, deleting a weight-one
outside vertex could remain in that bulk with coefficient one.  We instead
keep local private coordinates tensored with outside harmonics and place every
nonharmonic outside factor in the separately gapped sector.

\subsection{Global theorem}

Attach \(t_A\) padded gadgets to the common top-dimensional register \(R\)
with independent private interiors (\cref{def:independent}).  Let \(L_j\) be
the block of \(\partial_{X_A,d}\) that maps the private \(d\)-chains of
gadget \(j\) to the register \((d-1)\)-chains.  A private \(d\)-simplex has a
register face only if it contains exactly one private vertex, and removing
that vertex gives coefficient \(\pm\lambda\); hence every column of
\(\lambda^{-1}L_j\) has at most one nonzero entry and every row has at most
\(p_j\) nonzero entries, where \(p_j\) is the number of private vertices of
gadget \(j\).  Consequently the finite interface bound
\begin{equation}
\label{eq:interface}
L_jL_j^*\preceq N_j\lambda^2I,\qquad N_j\le N_*,
\end{equation}
holds with \(N_j\le p_j\), so \(N_*\) is a palette constant.
The input blocks are orthogonal, while the outputs may overlap.  Hence, for
\(L=[L_1\ \cdots\ L_{t_A}]\),
\[
LL^*=\sum_jL_jL_j^*\preceq N_*t_A\lambda^2I.
\]

\begin{theorem}[Finite-certificate degenerate-kernel transfer]
\label{thm:transfer}
Fix a finite palette satisfying \cref{ass:local-certificate}, with active
degrees \(d_a=2m_a-1\), \(m_a\le m\), and let \(\kappa=4m+2\).  Let
\(\mathcal A\) be a finite family of term sets whose gadgets are padded
palette members attached to the register \(R\) of \cref{sec:prelim} with
independent private interiors, all with the same private weight \(\lambda\).
Let \(t_{\max}=\max\{1,\max_{A\in\mathcal A}|A|\}\) and suppose
\[
H_A=\sum_{j\in A}\Pi_j\succeq
g_A(P_V-P_{K_A}),\qquad K_A=\ker H_A,\qquad 0<g_A\le1
\]
for every \(A\in\mathcal A\).
There are palette constants \(c_0,C>0\) such that, whenever
\(0<\lambda\le\min\{1,c_0/t_{\max}\}\), every unit \(x\in C_d(X_A)\)
satisfies
\begin{equation}
\label{eq:transfer-bound}
\boxed{\;
\|(I-P_{K_A})x\|^2
\le C\left[t_A\lambda^2+
\frac{\langle x,\Delta_Ax\rangle}{g_A\lambda^\kappa}\right].\;}
\end{equation}
Here \(P_{K_A}\) is the projector onto the embedded logical kernel, extended
by zero outside the register.

For \(0<\eta<1\), choose additionally
\begin{equation}
\label{eq:lambda-E-choice}
\lambda\le c_1\min\{t_{\max}^{-1},\eta/\sqrt{t_{\max}}\},
\qquad E_A\le c_2\eta^2g_A\lambda^\kappa,
\end{equation}
with palette constants \(c_1,c_2>0\).  Then
\begin{equation}
\label{eq:kernel-gap-conclusion}
\dim\ker\Delta_A=\dim K_A,\qquad
\spec(\Delta_A)\cap(0,E_A)=\varnothing .
\end{equation}
This includes \(K_A=0\), in which case \(\Delta_A\succeq E_AI\).
\end{theorem}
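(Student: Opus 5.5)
We prove the concentration bound \eqref{eq:transfer-bound} first. The exact multiplicity and the gap in \eqref{eq:kernel-gap-conclusion} then follow by a dimension count.

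\emph{Splitting the chain.} Write \(x=v+r+\sum_j y_j\) according to \(C_d(X_A)=C_d(R)\oplus\bigoplus_jC^{\rm priv}_{d,j}\) (\cref{def:independent}). Here \(v\in V\) and \(r\perp V\) lie in the register coordinates. Since \(C_{d+1}(R)=0\), \(V=Z_d(R)=\cH_d(R)\) and \(\partial_R\) is coercive on \(V^\perp\cap C_d(R)\) with a palette floor (it is a join of bowties, so the floor is uniform in \(n\)). For each gadget \(j\), apply \cref{lem:scaling-padding}. Its zero-weight kernel is \(V\oplus(Q_j\otimes\cH_{\rm out})\), and its complement has a uniform floor. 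We decompose \(y_j=q_j+s_j\), with \(q_j\in Q_j\otimes\cH_{\rm out}\) and \(s_j\) in the gapped zero-weight sector.

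\emph{Leakage energy.} Write \(D_A=D_{A,0}+\lambda D_{A,1}\) globally, so that \(\langle x,\Delta_Ax\rangle=\|D_Ax\|^2\). The zero-weight part controls \(\|r\|^2+\sum\|s_j\|^2\) up to an error \(O(\lambda^2)\cdot\|x\|^2\)-type cross terms. The cross terms coupling private chains to the register go through \(L=[L_1\cdots L_t]\), and \eqref{eq:interface} gives \(\|L\|^2\le N_*t\lambda^2\). No orthogonality between gadgets is needed. The private components \(q_j\) are invisible at order zero. Projecting onto the retained outputs \(R^{\rm out}_j\otimes P_{\cH_{\rm out}}\) then gives \(\|D_Ax\|\ge\lambda b\|q_j\|-(\text{contamination})\). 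The outputs are private to gadget \(j\) by (iv), so different \(j\) do not mix, and (iii) gives \(T_{j,1}V=0\). The contamination from \(s_j,r\) is bounded via the zero-weight floor. Combining these with a weighted Cauchy–Schwarz gives
\[
\|x-v\|^2\le C\Bigl[t\lambda^2+\lambda^{-2}\langle x,\Delta_Ax\rangle\Bigr].
\]
This is the step where \(\lambda\le c_0/t\) is used to absorb the interface cross terms.

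\emph{Logical energy.} Next we show that \(\Delta_A^\uparrow\) restricted near \(V\) dominates \(c\lambda^\kappa H_A^{\rm emb}\). By exact filling (i), padded as in \cref{lem:scaling-padding}, \(\ran\Pi_j\subseteq B_d(X_A)\). The local gap \eqref{eq:local-gap} then says that any \(v\) with \(\Pi_jv\neq0\) carries up-energy at least \(c\lambda^\kappa\|\Pi_jv\|^2\) inside gadget \(j\). Concretely, the component of \(v\) in \(\ran\Pi_j\) is a boundary, hence orthogonal to harmonics, and \(\partial^*\) is coercive there at scale \(\lambda^{\kappa}\). Summing over \(j\) uses the disjoint private \((d+1)\)-coordinates of different gadgets, which makes the up-energies add. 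With \eqref{eq:logical-gap} this gives \(\|(I-P_{K_A})v\|^2\le (g_A c\lambda^\kappa)^{-1}\langle v,\Delta_A v\rangle\). We then replace \(v\) by \(x\), paying \(\|\Delta_A^{1/2}\|\,\|x-v\|\)-type terms bounded by the leakage estimate. The main obstacle is this transfer of the local gap into a sum inequality in which the contributions of different gadgets add rather than interfere. I would handle it by working with \(\partial_{d+1}^*\) restricted to gadget-private \((d+1)\)-simplices, so that the blocks are coordinate-disjoint. The mismatch powers \(\lambda^{-2}\) versus \(\lambda^{-\kappa}\) are then absorbed into \(\lambda^{-\kappa}\).

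\emph{Dimension count.} Let \(F\) be the spectral subspace of \(\Delta_A\) below \(E_A\), and take a unit \(x\in F\). Then \(\langle x,\Delta_Ax\rangle<E_A\). With the choices \eqref{eq:lambda-E-choice}, the right side of \eqref{eq:transfer-bound} is at most \(C(c_1^2\eta^2+c_2\eta^2)<1\). Hence \(P_{K_A}|_F\) is injective, and \(\dim F\le\dim K_A\). Conversely, exact filling gives \(W_A\subseteq B_d(X_A)\cap V\). Moreover \(V\cap B_d\subseteq W_A\), since a filled vector has zero \(H_A\)-component in \(K_A\) by the local intersection condition (i). So \(V/W_A\cong K_A\) injects into \(H_d(X_A)\), and \(\dim\ker\Delta_A\ge\dim K_A\). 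Since \(\ker\Delta_A\subseteq F\), we get equality, and so \(F=\ker\Delta_A\). This includes the case \(K_A=0\), where \(F=0\) and \(\Delta_A\succeq E_AI\).
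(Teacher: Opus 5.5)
Your overall plan (leakage estimate, logical estimate, dimension count) matches the paper's, but the logical-energy step has a genuine gap. You bound $\|(I-P_{K_A})v\|^2$ by $\langle v,\Delta_Av\rangle/(cg_A\lambda^\kappa)$ with $v=P_Vx$, and then ``replace $v$ by $x$.'' No replacement of this kind gives the boxed bound.

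Since $\partial_dv=0$, the quantity $\langle v,\Delta_Av\rangle=\|\partial_{d+1}^*v\|^2$ is the coboundary of a register cycle into private $(d+1)$-simplices, whose entries are $\pm\lambda$. It is in general nonzero and of order $\lambda^2$, even when $x$ is exactly harmonic; in that case $x\perp W_A$ already forces $v\in K_A$. Dividing by $g_A\lambda^\kappa$ therefore leaves a structural term of order $\lambda^{2-\kappa}/g_A\gg1$. The $\|\Delta_A^{1/2}\|\,\|x-v\|$ route is worse still, because $\|\Delta_A\|$ grows with the number of register qubits. This is exactly the coupling of $\lambda$ to $g_A$ that the theorem avoids.

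The repair is to note that what you proved is an operator inequality on all of $C_d(X_A)$:
\begin{itemize}
\item The padded local gap gives $\Delta_j^\uparrow\succeq c\lambda^\kappa P_{B_d(Y_j)}\succeq c\lambda^\kappa\Pi_j$.
\item The up-Laplacians add exactly, since the register has no $(d+1)$-simplices and private interiors are independent. This additivity is not the real obstacle you describe.
\item Hence $\Delta_A^\uparrow\succeq c\lambda^\kappa H_A^{\rm emb}$.
\end{itemize}
Evaluate this at $x$ itself. Because $H_A^{\rm emb}$ vanishes off $V$, you get $\langle v,H_Av\rangle=\langle x,H_A^{\rm emb}x\rangle\le e/(c\lambda^\kappa)$ with $e=\langle x,\Delta_Ax\rangle$. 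Therefore
\[
\|(I-P_{K_A})x\|^2=\|x-v\|^2+\|(I-P_{K_A})v\|^2\le\|x-v\|^2+\frac{e}{cg_A\lambda^\kappa},
\]
with no cross terms. This is the paper's argument.

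Two smaller points.

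\emph{Leakage bookkeeping.} You organize leakage globally, while the paper uses per-gadget vectors $x_j=\omega_0+\omega_j$ and the shared-output identity. The global version can work, but your sketch is too optimistic where it matters.
\begin{itemize}
\item The order-$\lambda$ perturbation is not $O(\lambda^2)$ in energy. The coboundary of the register into all gadgets contributes up to order $t_A\lambda^2$. So the private floors give only $\sum_j\|s_j\|^2\le C(e+t_A\lambda^2+\lambda^2\sum_j\|y_j\|^2)$.
\item More importantly, $r$ enters the retained output of every gadget: $T_{j,1}V=0$, but $T_{j,1}r\ne0$. So the $q_j$-estimate picks up $t_A\|r\|^2$. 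With the crude bound $\|r\|^2\le C(e+t_A\lambda^2)$, the structural term degrades to $t_A^2\lambda^2$ rather than the boxed $t_A\lambda^2$.
\item You need $\|r\|^2\le C(e+t_A\lambda^2\sum_j\|y_j\|^2)$. This follows from the floor of $\partial_R$ and $\|[L_1\ \cdots\ L_{t_A}]\|^2\le N_*t_A\lambda^2$ alone. After that, $\lambda\le c_0/t_{\max}$ absorbs the resulting $t_A^2\lambda^2$ multiple of the private mass. The paper's $S_A=t_A\|r\|^2+\sum_j\|s_j\|^2$ does this bookkeeping automatically.
\end{itemize}

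\emph{Filling inclusion.} In the dimension count, $B_d(X_A)\cap V\subseteq W_A$ needs the explicit splitting argument. Decompose the filling $(d+1)$-chain into gadget components. The private $d$-coordinates of its boundary vanish gadget by gadget, so each component's boundary lies in the padded $B_d(Y_j)\cap V=\ran\Pi_j$.
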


\paragraph{Proof architecture.}
The full calculation is in \cref{app:transfer-proof}.  The shared-output
identity above and the private projected pairs first give the geometric
leakage estimate
\begin{equation}
\label{eq:geometric-leakage}
\|(I-P_V)x\|^2
\le C\left[t_A\lambda^2+
(t_A+\lambda^{-2})\langle x,\Delta_Ax\rangle\right].
\end{equation}
Exact filling and \cref{lem:scaling-padding} separately give
\begin{equation}
\label{eq:logical-domination}
\Delta_A^\uparrow\succeq c\lambda^\kappa H_A^{\rm emb},
\end{equation}
where \(\Delta_A^\uparrow=\partial_{d+1}\partial_{d+1}^*\) and
\(H_A^{\rm emb}\) is \(H_A\) extended by zero outside \(V\).
Combining these inequalities with the logical gap proves
\cref{eq:transfer-bound}.  The bound applies to every vector in a low-energy
spectral subspace.  Projection onto \(K_A\) is therefore injective on that
subspace, so its dimension is at most \(\dim K_A\).  Independent exact
fillings supply \(\dim K_A\) zero modes through \(V/W_A\); equality and the
whole-positive-spectrum gap follow.  No relative-acyclicity assumption is
used.

For the explicit history palette \(m\le6\), so \(\kappa=26\).  Taking a
dyadic \(\lambda=\Theta(\eta/t_{\max})\) gives the conservative scale
\begin{equation}
\label{eq:weighted-gap}
E_A=\Omega\!\left(\frac{\eta^{28}g_A}{t_{\max}^{26}}\right).
\end{equation}
The exponent is not claimed to be optimal.

\section{Filtered quotient naturality}
\label{sec:quotient}

The transfer theorem determines the dimension of every endpoint kernel. Exact filling determines the maps between those kernels at the homology level. This separation is useful: spectral concentration proves that there are no extra classes, while the quotient model computes persistence without comparing harmonic representatives.

\begin{lemma}[Independent filling]
\label{lem:independent-filling}
For every term set \(A\),
\begin{equation}
\label{eq:boundary-intersection}
B_d(X_A)\cap V=W_A
=\sum_{j\in A}\ran\Pi_j.
\end{equation}
\end{lemma}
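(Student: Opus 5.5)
Two inclusions are needed, and we treat them separately.

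\emph{The inclusion \(W_A\subseteq B_d(X_A)\cap V\).} By \cref{lem:scaling-padding}, the padded gadget \(j\) satisfies
\[
B_d(\text{padded }Y_j)\cap V=\ran\Pi_j .
\]
The padded gadget is a subcomplex of \(X_A\), and its weighted boundary agrees with the restriction of \(\partial_{X_A,d+1}\) in normalized coordinates. Hence each \(\ran\Pi_j\) lies in \(B_d(X_A)\). Since \(V\subseteq C_d(R)\), we obtain \(W_A\subseteq B_d(X_A)\cap V\).

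\emph{The inclusion \(B_d(X_A)\cap V\subseteq W_A\).} Here we use independent private interiors (\cref{def:independent}). Since \(C_{d+1}(R)=0\), every \((d+1)\)-simplex of \(X_A\) contains a private vertex, and therefore belongs to exactly one padded gadget. So
\[
C_{d+1}(X_A)=\bigoplus_{j\in A}C^{\rm priv}_{d+1,j},
\]
and any \(z\in B_d(X_A)\cap V\) can be written as \(z=\sum_j\partial c_j\) with \(c_j\) supported in gadget \(j\). Each \(\partial c_j\) splits as
\[
\partial c_j=r_j+p_j,\qquad r_j\in C_d(R),\quad p_j\in C^{\rm priv}_{d,j}.
\]
The private blocks are coordinate-disjoint across gadgets, and \(z\) has no private part. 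Hence \(p_j=0\) for every \(j\). Each \(\partial c_j=r_j\) is then a register chain that is a boundary in gadget \(j\).

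Next we show that \(r_j\) lies in \(V\). Because \(\partial r_j=\partial\partial c_j=0\), the chain \(r_j\) is a cycle in \(R\). As \(R\) is top-dimensional, \(Z_d(R)=V\), so \(r_j\in V\). It follows that
\[
r_j\in B_d(\text{padded }Y_j)\cap V=\ran\Pi_j,
\]
and therefore \(z=\sum_j r_j\in W_A\).

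\emph{The identity \(W_A=K_A^\perp\).} This is standard for sums of positive semidefinite operators: \(\ker\sum_j\Pi_j=\bigcap_j\ker\Pi_j\), whose orthogonal complement is \(\sum_j\ran\Pi_j\).

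\emph{Main obstacle.} The delicate point is the claim that the padded gadget's boundary intersection is exactly \(\ran\Pi_j\), and that restricting \(c_j\) to gadget \(j\) loses nothing. This requires checking two things. First, the subcomplex generated by gadget \(j\) together with \(R\) is precisely the padded gadget of \cref{lem:scaling-padding}: its private vertices are joined to all unaffected register vertices and to nothing from other gadgets. Second, the weighted and unweighted boundary spaces agree on register coordinates, since \(W_d\) is the identity there, as noted after \cref{ass:local-certificate}. Once the subcomplex identification is verified, the decomposition argument above is purely coordinatewise.
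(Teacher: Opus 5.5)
Your proof is correct and follows the paper's argument. The easy inclusion comes from fillings inside each padded gadget. For the reverse inclusion, you split a \((d+1)\)-chain into gadget components, note that the private boundary coordinates vanish separately for each gadget because the private supports are disjoint, and then apply \(B_d(Y_j)\cap V=\ran\Pi_j\) to each remaining register boundary. You also make explicit a step the paper leaves implicit: each \(r_j=\partial c_j\) is a register cycle and therefore lies in \(Z_d(R)=V\).
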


\begin{proof}
Every vector in \(\ran\Pi_j\) has a filling in gadget \(j\), so \(W_A\subseteq B_d(X_A)\cap V\). Conversely, write a degree-\((d+1)\) chain as a sum of its gadget components. If its total boundary lies in \(V\), then its boundary has zero coordinate on every private \(d\)-simplex. Private supports of distinct gadgets are disjoint, so this vanishing holds separately for each gadget component. The remaining register boundary of component \(j\) lies in \(B_d(Y_j)\cap V=\ran\Pi_j\). Summing over \(j\) proves the reverse inclusion.
\end{proof}

No relative-acyclicity hypothesis appears in \cref{lem:independent-filling}. Extra geometric homology outside \(V\) is excluded later by the all-chain concentration theorem, not by the boundary decomposition alone.

\begin{theorem}[Natural quotient realization]
\label{thm:quotient}
Under the hypotheses and parameter choice of \cref{thm:transfer}, the map
\begin{equation}
\label{eq:qA}
q_A:V/W_A\longrightarrow H_d(X_A),
\qquad
q_A(v+W_A)=[v],
\end{equation}
is an isomorphism. For \(A\subseteq B\), these isomorphisms make the diagram
\[
\begin{CD}
V/W_A @>{\rho_{AB}}>> V/W_B\\
@V{q_A}V{\cong}V @VV{\cong}V{q_B}\\
H_d(X_A) @>{(\iota_{AB})_*}>> H_d(X_B)
\end{CD}
\]
commute, where \(\rho_{AB}\) is the natural quotient map. Consequently,
\begin{equation}
\label{eq:persistent-rank}
\rank(\iota_{AB})_*
=\dim(V/W_B)
=\dim K_B.
\end{equation}
\end{theorem}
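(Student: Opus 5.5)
The plan is to build \(q_A\) from the cycle inclusion \(V\subseteq Z_d(X_A)\) and then obtain bijectivity from two facts already in hand: injectivity from \cref{lem:independent-filling}, and surjectivity from the dimension count in \cref{thm:transfer}. Naturality is then a formal consequence of how the maps are defined.

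\emph{Well-definedness.} Since \(R\) is top-dimensional, \(V=Z_d(R)\) consists of cycles of \(R\). The boundary map of \(X_A\) restricted to register chains is the register boundary, because all register vertices have weight one. Hence every \(v\in V\) is a cycle in \(X_A\), and \(v\mapsto[v]\) is a linear map \(V\to H_d(X_A)\). Its kernel is \(\{v\in V: v\in B_d(X_A)\}=B_d(X_A)\cap V\). By \cref{lem:independent-filling} this equals \(W_A\). So the map descends to \(q_A\) on \(V/W_A\), and \(q_A\) is injective.

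\emph{Surjectivity.} We have \(\dim V/W_A=\dim W_A^\perp=\dim K_A\), using \(W_A=K_A^\perp\) from \cref{eq:logical-sum}. Hodge theory gives \(\dim H_d(X_A)=\dim\ker\Delta_A\), and under the parameter choice of \cref{thm:transfer} this equals \(\dim K_A\) by \cref{eq:kernel-gap-conclusion}. An injective linear map between spaces of equal finite dimension is an isomorphism. The main subtlety is to check that the lower bound \(\dim\ker\Delta_A\ge\dim K_A\) used inside the proof of \cref{thm:transfer} is not itself derived from the present theorem. If it were, the argument would be circular. I expect this to be fine: that lower bound needs only the injectivity just established, so I would state injectivity first and cite it there.

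\emph{Commutativity.} For \(A\subseteq B\), \(X_A\) is a subcomplex of \(X_B\) with the same weights, so \(\iota_{AB}\) is a chain map that fixes \(V\) pointwise. Also \(W_A\subseteq W_B\), so \(\rho_{AB}(v+W_A)=v+W_B\) is well defined. Then
\[
(\iota_{AB})_*q_A(v+W_A)=[\iota_{AB}v]_{X_B}=[v]_{X_B}=q_B\rho_{AB}(v+W_A),
\]
so the diagram commutes. Because \(q_A\) and \(q_B\) are isomorphisms and \(\rho_{AB}\) is surjective, \((\iota_{AB})_*=q_B\rho_{AB}q_A^{-1}\) is surjective. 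Its rank is therefore \(\dim V/W_B=\dim K_B\), which proves \cref{eq:persistent-rank}.
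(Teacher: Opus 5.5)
Your proof is correct and follows the paper's argument essentially step for step: well-definedness and injectivity come from \cref{lem:independent-filling}, surjectivity comes from the dimension equality \(\dim\ker\Delta_A=\dim K_A\) in \cref{thm:transfer}, and the square commutes because both paths send \(v+W_A\) to the class of the same register cycle in \(X_B\). Your circularity check also agrees with the paper, whose multiplicity argument relies only on the injectivity supplied by \cref{lem:independent-filling}.
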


\begin{proof}
See \cref{app:quotient-proof}.
\end{proof}

For a chain \(A_0\subseteq A_1\subseteq\cdots\subseteq A_s\), every arrow in degree \(d\) is an epimorphism. The constructed barcode therefore has deaths but no new births in this degree. More generally, for any \(i\le j\),
\[
\rank\bigl[H_d(X_{A_i})\to H_d(X_{A_j})\bigr]
=\dim K_{A_j}.
\]
Harmonic representatives may rotate between filtration levels; the exact
statement concerns the algebraic homology maps.

Taking \(A_{\rm in}\) to be the clock, propagation, and clean-input terms of a history Hamiltonian, and taking \(A_{\rm out}\) to add the final rejection term, gives
\begin{equation}
\label{eq:history-ratio}
q_d(\Xin,\Xout)
=\frac{\dim K_{\rm out}}{\dim K_{\rm in}}.
\end{equation}
This identity is the bridge from exact logical perfect-space fractions to normalized persistence.

\section{Exact histories, a fixed clique palette, and hardness}
\label{sec:circuit}

Let
\[
G_2=\{X,\CX,\CCX,H\otimes H\}.
\]
Following Rudolph's gate-dependent definition
\cite[Definition~2.2]{rudolph2026universal},\footnote{Definition, theorem,
and appendix numbers for~\cite{rudolph2026universal} refer to the full
version, arXiv:2411.02681v2.} \(\BQPone^{G_2}\) consists of
promise problems decided by a polynomial-time uniform family of exact
\(G_2\) circuits on an all-zero input, with one computational-basis output
bit, perfect completeness, and inverse-polynomial rejection probability in
the no case.  Rudolph's exact error reduction
\cite[Theorem~2.3]{rudolph2026universal} preserves perfect completeness and
stays in any gate set containing \(G_2\); we apply it first so that soundness
is at most \(1/3\).  The superscript is essential: no approximate compilation
or gate-independent identification is used below.

\subsection{Rational histories and the certified palette}

Write \(K=\sqrt2H\).  Replace each \(H\otimes H\) transition by a gain edge
\(K\) on one work bit followed by a loss edge \(K^{-1}=H/\sqrt2\) on the
other.  Their composite is exactly \(H\otimes H\), and each edge is enforced
by two orthonormal rational three-term vectors.  The vectors and their direct
calculation appear in \cref{app:rational-split}.  This construction is due to
Rudolph~\cite[Appendix~D.1]{rudolph2026universal}.

The propagation constraints for \(X,\CX,\CCX\) decompose, on their
constant-size active support, into two-time vectors
\[
(\ket{0,z}-\ket{1,Uz})/\sqrt2.
\]
Clock, input, and output penalties are basis projectors.  Unaffected work
qubits are identities and are implemented by the harmonic padding of
\cref{lem:scaling-padding}; they are not expanded into exponentially many
rank-one terms.

\begin{lemma}[Certified fixed palette]
\label{lem:palette}
Every local constraint in the split \(G_2\) history is implemented by a
member of one fixed family of weighted clique gadgets of total support
locality at most six.  Every member satisfies
\cref{ass:local-certificate,eq:interface}, with uniform constants, and the
graph construction is polynomial in the circuit size.
\end{lemma}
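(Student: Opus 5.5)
The plan is to make the palette explicit and finite, and then reduce every clause of the lemma to a finite, exactly checkable computation on each member. First, list the constraint types that occur in the split \(G_2\) history after Rudolph's exact error reduction and the rational splitting of \cref{app:rational-split}. These are: clock-validity and clock-transition projectors; two-time propagation vectors \((\ket{0,z}-\ket{1,Uz})/\sqrt2\) for \(X,\CX,\CCX\); the two orthonormal rational three-term vectors for each gain edge \(K\) and loss edge \(K^{-1}\); and the basis projectors for clean input and output rejection. With a unary or domain-wall clock of constant active width, the worst case is a \(\CCX\) transition. It touches three work qubits and a bounded number of clock qubits, for total active support at most six, so \(m\le 6\). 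Up to relabeling of register factors, there are only finitely many rank-one projectors \(\pi_a\), each with rational coefficients. So the palette is a finite list of graph types, and each graph has constant size.

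Second, for each \(\pi_a\) I would take the established clique gadget for a rational integer-coefficient state (\cite{crichigno2024clique,king2026gapped}), adjoining selected-cycle guards where needed, and then certify \cref{ass:local-certificate} directly. For (i), exhibit an explicit rational filling of \(\pi_a\)'s range vector. For the reverse inclusion, compare boundary ranks, or use the guard connecting-map argument, to show that no other register coset is filled. For (ii), compute \(\ker D_{a,0}\) by exact rational elimination, backed by a modular rank bound, and identify it as \(V_a\oplus Q_a\). For (iii), choose \(R_a^{\rm out}\) to be the projection onto output coordinates that contain a private vertex. This makes \(T_a(\lambda)\) linear in \(\lambda\) and gives (iv) by construction. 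Then check \(T_{a,1}V_a=0\) and that the Gram determinant of \(T_{a,1}\) on a basis of \(Q_a\) is nonzero. Since the palette is finite, taking minima and maxima over it gives uniform \(b\), a uniform singular-value floor, and uniform norms. The interface bound \eqref{eq:interface} holds with \(N_j\le p_j\), as already argued before \cref{thm:transfer}, because the number of private vertices is bounded on the palette. Padding to the full register is handled by \cref{lem:scaling-padding} and needs no further certification.

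Third, for polynomiality: the history has polynomially many terms, one per clock step and per local penalty. Each gadget adds \(O(1)\) private vertices, and each private vertex is joined to all vertices of the unaffected register factors. So the vertex set is \(O(n)+O(t)\), the edge set is polynomial, and the graph is output in polynomial time with dyadic weights. This holds provided all gadgets use disjoint new vertices, which also gives \cref{def:independent}.

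The main obstacle is the reverse inclusion in (i) together with (ii), for the three-term rational states and the \(\CCX\)-sized supports. Generic gadgets can fill extra register cosets or create extra private kernel at \(\lambda=0\), which would violate the direct-sum description. I would first try the plain gadget and check ranks exactly by computer. Where the check fails, I would add selected-cycle guards to kill the spurious classes, which yields \(Q_a\) of the form \(Q\otimes\C\gamma\), and then recheck. The resulting tables are what \cref{app:palette} records.
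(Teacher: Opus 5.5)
There is a genuine gap in how you certify the six-local members. Your enumeration of constraint types, the locality count, the interface bound with \(N_j\le p_j\), and the size bookkeeping all agree with the paper.

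The paper computes exact certificates only for a few small atoms built from Rudolph's graphs. These are \(\ket{0}-\ket{1}\) in degree \(1\), and \(\ket{00}-\ket{11}\) and \(\ket{00}-\ket{10}-\ket{11}\) in degree \(3\); the other two-qubit vectors come from signed petal relabelings. Every six-local member (clock-neighbor conditions, controls, basis penalties) is then produced by the selected-cycle guard \(Z=(Y*S_\beta)\cup(R*B)\), and its certificate is proved \emph{analytically}. The relative isomorphism \(\widetilde C_*(Z,R')\cong\widetilde C_*(Y,R)\widehat{\otimes}\widetilde C_*(S_\beta)\) makes the connecting map the old one tensored with \([\gamma_\beta]\). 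This yields:
\begin{itemize}
\item exact filling of \(\ran(\pi_\phi\otimes\ket{\beta}\bra{\beta})\);
\item the zero-weight kernel \(V'\oplus(Q\otimes\C\gamma_\beta)\), with every other bidegree above a fixed guard gap;
\item condition (iii) with the same \(b\), after projecting the old output pair onto the selected harmonic line;
\item the interface bound.
\end{itemize}
Iterating this a bounded number of times gives a finite palette with uniform constants, while only small complexes are ever computed.

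You have no such closure step. You would therefore have to certify each guarded six-qubit member directly, in degree \(11\), on complexes containing the join of six bowties. You would also start from general integer-state gadgets in the literature, for which (ii) and (iii) were never established; the paper deliberately avoids appealing to a general integer-state gadget.

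Your fallback also misreads what guards do. By the same relative identity, a guard tensors the \emph{entire} old boundary intersection, and the \emph{entire} old zero-weight private kernel, with the selected petal. So a spuriously filled coset or a bad private kernel vector of the base gadget survives. Guards change the implemented projector to \(\pi\otimes\ket{\beta}\bra{\beta}\); they cannot repair a failing gadget.

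There is also a concrete error in your treatment of (iii) and (iv). Projecting onto all output coordinates that contain a private vertex does not give \(T_a(\lambda)=\lambda T_{a,1}\). A filling must contain \((d_a+1)\)-simplices \(\rho\) with both a private vertex and a register vertex \(v\). Such a private output row of \(\partial_{d_a+1}^*\) has the entry \(\pm w(v)=\pm1\) at the face \(\rho\setminus v\), so \(R_a^{\rm out}D_{a,0}\neq0\).

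Homogeneity requires \(\ran R_a^{\rm out}\subseteq\ker D_{a,0}^*\), and it is genuinely used. In the proof of \cref{thm:transfer}, \(\|T_j\|\le C\lambda\) bounds \(\|T_jA_{j,0}x_j\|\) by \(C\lambda\|A_{j,0}x_j\|\). With your choice that term is only \(O(\|A_{j,0}x_j\|)\). The bound on \(S_Q\) then picks up \(S_A/\lambda^2\), the structural term \(t_A\lambda^2\) becomes \(t_A\), and the estimate is vacuous. The output projector must be chosen inside \(\ker D_{a,0}^*\), supported on private output coordinates, and coercivity on \(Q_a\) must be checked there. Neither (iii) nor its homogeneity follows by construction.
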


\begin{proof}
See \cref{app:palette}.  It records the finite integer certificates, exact
relabelings, selected-cycle guard closure, and polynomial-size bookkeeping.
\end{proof}

\subsection{Complete history kernels}

Let \(L\) be the number of legal clock positions after splitting the
Hadamard transitions.  Put \(s_t=\sqrt2\) immediately after a gain edge and
\(s_t=1\) otherwise.  If \(V_t\) is the unitary prefix after removing these
scalars, then
\begin{equation}
\label{eq:history-isometry}
\cV v=Z^{-1/2}\sum_{t=0}^{L-1}s_t\ket tV_tv,\qquad
Z=\sum_ts_t^2,\qquad L\le Z\le2L.
\end{equation}
Every gain edge is immediately followed by its matching loss edge, with no
gate between them, so \(s_0=s_{L-1}=1\).

\begin{proposition}[History kernels and gaps]
\label{prop:history}
Let \(C\) be the valid clean-input space, \(D=\dim C>0\), and let
\(M=J^*U^*P_{\rm acc}UJ\) be the compressed acceptance operator.  Put
\(S=\ker(I-M)\).  The initial history Hamiltonian has complete kernel
\(\cV C\), while adding the final rejection projector gives complete kernel
\(\cV S\).  If \(M|_{S^\perp}\preceq I/3\), then
\begin{equation}
\label{eq:history-gaps}
g_{\rm in}\ge\frac1{8L^2},\qquad
g_{\rm out}\ge\frac1{3L(8L^2+1)}\ge\frac1{27L^3}.
\end{equation}
The second statement includes \(S=0\).
\end{proposition}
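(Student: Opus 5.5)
The plan is the standard Kitaev route: reduce to the legal clock subspace, make the propagation terms a path Laplacian, impose the input constraint as a boundary condition, and add the rejection term with a projection-type lemma.

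\textbf{Step 1: reduction to the legal clock subspace.} The clock penalties are basis projectors that commute with everything else and annihilate exactly the legal clock strings. Every propagation, input and output term maps legal clock states to legal clock states. So the Hamiltonian is block diagonal: on the illegal block the clock penalty alone gives energy at least \(1\), and it suffices to work on \(\operatorname{span}\{\ket t\}_{t<L}\otimes\cH_{\rm work}\).

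\textbf{Step 2: the propagation terms as a path Laplacian.} On the legal subspace I conjugate by the unitary \(\mathcal W=\sum_t\ket t\bra t\otimes V_t\), which uses the unitary prefixes with the scalars \(s_t\) stripped off.
\begin{itemize}
\item For \(X,\CX,\CCX\) edges, each term becomes \(\tfrac12(\ket t-\ket{t+1})(\bra t-\bra{t+1})\otimes I\), summed over the two-time vectors.
\item For a gain edge \(K=\sqrt2H\) followed by its loss edge \(K^{-1}\), I use the explicit rational vectors of \cref{app:rational-split}. Their span, after conjugation, is \((\ket t-\sqrt2\ket{t+1})\otimes\C^2\), normalized, and \((\sqrt2\ket{t+1}-\ket{t+2})\otimes\C^2\), normalized. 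This is the scalar form encoded by \(s_{t+1}=\sqrt2\).
\end{itemize}
Hence \(\mathcal W^*H_{\rm prop}\mathcal W=\Lambda\otimes I\), where \(\Lambda\) is a weighted path operator on \(\C^L\) with kernel spanned by \(\sum_ts_t\ket t\). This gives the history isometry \eqref{eq:history-isometry}, and shows that the kernel of \(H_{\rm prop}\) is \(\cV(\text{all inputs})\).

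\textbf{Step 3: the input constraint and \(g_{\rm in}\).} The clean-input penalty \(\ket0\bra0\otimes(I-P_C)\) commutes with \(\Lambda\otimes I\) after conjugation, because \(V_0=I\). So \(H_{\rm in}\) splits over the sectors \(C\) and \(C^\perp\):
\begin{itemize}
\item On \(C\) the operator is \(\Lambda\), with kernel \(\cV C\).
\item On \(C^\perp\) it is \(\Lambda+\ket0\bra0\), a path with a Dirichlet end.
\end{itemize}
I then bound the smallest positive eigenvalue of \(\Lambda\) and the smallest eigenvalue of \(\Lambda+\ket0\bra0\) by \(1/(8L^2)\). One way is a direct Rayleigh-quotient argument: write a vector as \(f_t s_t\) and use a discrete Poincar\'e inequality, with \(s_t^2\in\{1,2\}\) absorbed into the constant. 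Alternatively, compare with the unweighted path, whose gap is \(2(1-\cos(\pi/L))\ge 4/L^2\), losing a factor at most \(2\) from the weights and \(\tfrac12\) from the normalization. This proves the kernel claim and \(g_{\rm in}\ge1/(8L^2)\).

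\textbf{Step 4: the rejection term and \(g_{\rm out}\).} Add \(B=\ket{L-1}\bra{L-1}\otimes UJ(\ldots)\), the rejection projector, with \(\|B\|\le1\). Compressed to \(\ker H_{\rm in}=\cV C\), and using \(s_{L-1}=1\), it acts as \(\cV(I-M)\cV^*/Z\). Therefore the kernel of \(H_{\rm in}+B\) is \(\cV S\). On \((\cV S)^\perp\cap\cV C\) the compression is at least \((1-\tfrac13)/Z\ge 1/(3L)=:b\), using \(Z\le 2L\).

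I then apply Kitaev's projection lemma in the form: if \(A\succeq a\) on \((\ker A)^\perp\) and \(P_{\ker A}BP_{\ker A}\succeq b\) there, then the bottom of \(A+B\) above the common kernel is at least \(ab/(a+\|B\|)\). I would prove this by a two-by-two block Schur-complement estimate. With \(a=1/(8L^2)\) and \(\|B\|\le 1\) it gives
\[
g_{\rm out}\ge \frac{b\,a}{a+1}=\frac1{3L(8L^2+1)}\ge\frac1{27L^3}.
\]
The block argument does not use \(S\neq0\), so the case \(S=0\) is included.

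\textbf{Main obstacle.} I expect the hardest part to be Step 2: checking that the non-unitary gain/loss pair really conjugates to a path operator with the correct weights and a uniform lower bound on its edge strengths. The two enforcing vectors must span exactly the intended two-dimensional range on every work-bit value; otherwise a spurious zero mode or a smaller edge weight would break the \(1/(8L^2)\) constant. I would verify this first by a direct computation from \cref{app:rational-split}.
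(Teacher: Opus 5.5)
Your proposal follows essentially the same route as the paper: you split off the illegal clock sector, write legal states as $s_tV_t\xi_t$ so the propagation terms become a weighted path with a Neumann sector on $C$ and a Dirichlet-end sector on $C^\perp$, and feed the compression $(I-M)/Z$ into the two-operator bound $ga/(g+1)$, which the paper derives from Jordan's canonical $2\times2$ blocks for the projector pair (if you use a Schur complement instead, bounding the diagonal and off-diagonal blocks separately by $\|B\|$ loses an extra $b$ in the denominator, whereas $\lambda_{\min}=\det/\lambda_{\max}\ge ab/(a+\|B\|)$ on the span of $Px$ and $(I-P)x$ gives the sharp form). One slip to fix in Step~2: after conjugation the gain and loss ranges are $(\sqrt2\ket t-\ket{t+1})/\sqrt3$ and $(\ket{t+1}-\sqrt2\ket{t+2})/\sqrt3$, tensored with the whole work space; these are orthogonal to the history amplitudes $(1,\sqrt2,1)$ and give edge weights $2/3$ in the $\xi$ variables, whereas the vectors you wrote would force $s_{t+1}=1/\sqrt2$, contradicting your own $s_{t+1}=\sqrt2$.
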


The proof in \cref{app:source-details} uses a weighted path Poincar\'e
inequality.  The output compression is exactly
\begin{equation}
\label{eq:output-compression}
\cV^*H_{\rm rej}\cV=(I-M)/Z,
\end{equation}
and a two-positive-operator estimate turns this angle penalty into the second
gap in \cref{eq:history-gaps}.  Combining
\cref{thm:transfer}, \cref{thm:quotient}, \cref{lem:palette}, and
\cref{prop:history} yields weighted
clique complexes \(\Xin\subseteq\Xout\) with
\begin{equation}
\label{eq:generic-circuit-transfer}
\beta_d(\Xin)=D,\qquad
\beta_d(\Xout)=\beta_d(\Xin\to\Xout)=\dim S,
\end{equation}
and inverse-polynomial gaps above both complete kernels.

\subsection{A fixed-eight source}

Given a \(\BQPone^{G_2}\) verifier \(Q_x\) with acceptance probability \(p_x\),
add a maximally mixed label \(z\in\{0,1\}^3\), run \(Q_x\) unconditionally,
and accept according to
\[
[z=000]\ \lor\
\bigl(q\land[z\in\{001,010,011,100,101\}]\bigr),
\]
where \(q\) is the output of \(Q_x\).  The reversible predicate uses only
\(X,\CX,\CCX\), preserves \(z\), and uncomputes its scratch.  Its compressed
acceptance operator on the label is exactly
\begin{equation}
\label{eq:eight-label}
M_x=\operatorname{diag}(1,p_x,p_x,p_x,p_x,p_x,0,0).
\end{equation}
Thus the perfect eigenspace has dimension six when \(p_x=1\) and dimension
one when \(p_x\le1/3\); every off-perfect eigenvalue in the latter case is at
most \(1/3\).

\begin{theorem}[Fixed-eight hardness]
\label{thm:weighted-hardness}
There is a deterministic polynomial-time many-one reduction from
\(\BQPone^{G_2}\) to the decision version of weighted gapped normalized
persistence (\cref{def:problem}).  Every output
satisfies
\[
\beta_d(\Xin)=8,\qquad
\beta_d(\Xin\to\Xout)=
\begin{cases}6,&x\in L,\\1,&x\notin L,\end{cases}
\]
and both endpoint degree-\(d\) Laplacians have inverse-polynomial gaps above
their complete kernels.  Hence the normalized values are \(3/4\) and \(1/8\).
Additive error \(1/24\), with decision threshold \(7/16\), distinguishes the
two cases with success probability at least \(2/3\).
\end{theorem}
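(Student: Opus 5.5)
The reduction composes the pieces already in place. Given an instance \(x\) of a \(\BQPone^{G_2}\) promise problem \(L\), first apply Rudolph's exact error reduction so that the verifier \(Q_x\) has perfect completeness and acceptance probability \(p_x\le1/3\) in the no case. Then form the labeled verifier of the fixed-eight source: adjoin three label qubits that are maximally mixed, i.e.\ the clean-input space \(C\) leaves the label free, so \(D=\dim C=8\). Run \(Q_x\) and compute the reversible predicate with \(X,\CX,\CCX\) only. By \eqref{eq:eight-label}, \(M_x=\operatorname{diag}(1,p_x,p_x,p_x,p_x,p_x,0,0)\). I will check directly that \(S=\ker(I-M_x)\) has dimension \(6\) when \(p_x=1\) and dimension \(1\) when \(p_x\le1/3\). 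In the second case \(M_x|_{S^\perp}\preceq I/3\). In the first case \(S^\perp\) is spanned by the last two label states with eigenvalue \(0\), so the hypothesis of \cref{prop:history} holds in both cases.

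Next I split the \(H\otimes H\) gates into rational gain and loss edges and build the history Hamiltonian. The term set \(A_{\rm in}\) contains the clock, propagation and clean-input terms, and \(A_{\rm out}=A_{\rm in}\cup\{\text{rejection}\}\). \Cref{prop:history} gives complete kernels \(\cV C\) and \(\cV S\) with \(g_{\rm in}\ge1/(8L^2)\) and \(g_{\rm out}\ge1/(27L^3)\). Each term is realized by a palette gadget from \cref{lem:palette}, which supplies \cref{ass:local-certificate} and \eqref{eq:interface} with uniform constants, with disjoint private vertices (\cref{def:independent}). Take \(\mathcal A=\{A_{\rm in},A_{\rm out}\}\), a fixed \(\eta\) (say \(1/2\)), and a dyadic \(\lambda=\Theta(1/t_{\max})\) satisfying \eqref{eq:lambda-E-choice}. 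Then \cref{thm:transfer} gives \(\dim\ker\Delta_{A}=\dim K_A\) and no spectrum in \((0,E_A)\), with \(E_A=\Omega(g_A/t_{\max}^{26})\) by \eqref{eq:weighted-gap}. This is inverse polynomial because \(L\) and \(t_{\max}\) are polynomial in \(|x|\), so we may output rational gap bounds \(\gamma_i=E_{A_i}\). \Cref{thm:quotient}, through \eqref{eq:history-ratio}, then yields \(\beta_d(\Xin)=\dim K_{\rm in}=8\) and \(\beta_d(\Xin\to\Xout)=\dim K_{\rm out}=\dim S\in\{6,1\}\), hence \(q_d\in\{3/4,1/8\}\).

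It remains to verify that the output is a legal instance of \cref{def:problem}. Both complexes are clique complexes of graphs on one common vertex set: \(\Xin\) uses the same vertices with the rejection gadget's private vertices present but isolated from it, or equivalently we add those vertices to both. I must check that isolated extra vertices do not alter degree-\(d\) homology or the Laplacian for \(d\ge1\), which holds because they contribute only \(0\)-simplices. The weights are dyadic, and all sizes are polynomial by \cref{lem:palette}. For the decision step, an estimate within \(1/24\) lies in \([17/24,1]\) in yes cases and in \([0,1/6]\) in no cases. These intervals are separated by the threshold \(7/16\), since \(1/6<7/16<17/24\), so the success probability \(2/3\) transfers.

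The main obstacle is the bookkeeping that the parameters are simultaneously admissible. The single \(\lambda\) must satisfy the condition of \cref{thm:transfer} for both term sets with \(t_{\max}=|A_{\rm out}|\). The resulting \(E_A\), together with all weights, must be polynomially encoded dyadic or rational numbers that the deterministic reduction can compute without knowing whether \(x\in L\). The choice of \(g_{\rm out}\) is instance-independent because \cref{prop:history} holds in both promise cases, and I would emphasize this point.
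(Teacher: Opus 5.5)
Your proposal is correct and follows essentially the same route as the paper: instantiate \eqref{eq:generic-circuit-transfer} (combining \cref{thm:transfer}, \cref{thm:quotient}, \cref{lem:palette}, and \cref{prop:history}) on the eight-label operator \eqref{eq:eight-label}, choose dyadic \(\lambda\) and \(E_i\) from the hard-coded palette constants and the explicit rational logical gaps (which hold in both promise cases), and separate \(3/4\) from \(1/8\) at threshold \(7/16\). Your explicit checks that the off-perfect ceiling also holds in the yes case and that the rejection gadget's private vertices can sit isolated in \(\Xin\) are points the paper leaves implicit in this proof. The paper, in turn, spells out the vertex and edge counts and uses \(\eta_0=1/10\) instead of your \(1/2\), which is immaterial.
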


\begin{proof}
Apply \cref{eq:generic-circuit-transfer} to \cref{eq:eight-label}.  The two
ratios differ by \(5/8\), and \(1/24<5/16\).  The logical gaps in
\cref{eq:history-gaps} are explicit rationals.  The fixed palette supplies
hard-coded positive rational lower bounds for the constants in
\cref{thm:transfer}.  Fix \(\eta_0=1/10\); from the polynomially bounded
\(t_{\max}\), choose the first integer \(b\) for which
\(\lambda=2^{-b}\) satisfies both inequalities in
\cref{eq:lambda-E-choice}.  For each endpoint let \(s_i\) be the least
integer with \(2^{-s_i}\le\frac12c_2\eta_0^2g_i\lambda^{26}\), and output
\(E_i=2^{-s_i}\), which satisfies \cref{eq:lambda-E-choice}.  Thus
\(b=O(\log t_{\max})\) and \(s_i=O(\log t_{\max}+\log L)\), so the binary
encodings of \(\lambda\) and \(E_i\) have polynomial length.  If the history
uses \(n\) register qubits (clock and work bits), the register has \(7n\)
vertices and \(O(n^2)\) edges, and each gadget adds a constant number of
private vertices together with \(O(n)\) edges to the unaffected register
factors; so the weighted graph has \(O(n+t_{\max})\) vertices and
\(O(n^2+t_{\max}n)\) explicitly generated edges.  Hence the map
outputs the two graphs, their common weight function, \(d,E_1,E_2\), and the
fixed decision threshold in deterministic polynomial time.
\end{proof}

\begin{remark}[Integer weights]
\label{rem:integer-weights}
Multiplying every vertex weight by \(2^{b}\) gives an equivalent instance
with integer weights in \(\{1,2^{b}\}\): the normalized boundary
\(W_{k-1}^{-1}\partial W_k\) scales by \(2^{b}\), so the Laplacian and both
gap bounds scale by \(2^{2b}\), while homology is unchanged.  The weighted
instances therefore also satisfy the polynomially bounded weight convention
of~\cite{lowe2026normalized}.
\end{remark}

\begin{remark}[Other gate sets]
\label{rem:gate-sets}
By Rudolph's exact simulation theorem~\cite[Theorem~3.4]{rudolph2026universal},
\(\BQPone^{G}\subseteq\BQPone^{G_2}\) for every finite gate set \(G\) whose
entries lie in a cyclotomic field \(\mathbb Q(\zeta_{2^k})\), \(k\in\N\);
Clifford\(+T\) is an example.  Composing that simulation with
\cref{thm:weighted-hardness} gives \(\BQPone^{G}\)-hardness for each such
\(G\) separately.  This is a gate-family-relative statement, not
gate-independent \(\BQPone\)-hardness.
\end{remark}

The exact trace-versus-perfect-space obstruction and the mixed-spectator
extension to individual real Hadamards are recorded in
\cref{app:source-boundaries}.  Neither is needed for
\cref{thm:weighted-hardness}.

\section{A natural unweighted corollary}
\label{sec:unweighting}

The weighted construction uses only weights \(1\) and a dyadic
inverse-polynomial \(\lambda\).  Hayakawa's common-copy construction replaces
each vertex \(v\) by a clique of \(f_v\) labeled copies and each base edge by
all cross-block edges~\cite{hayakawa2026unweighted}.  Choose one common
\(F=\lambda^{-2}\) and
\[
f_v=Fw(v)^2.
\]
Register vertices then have \(F\) copies and private vertices have one.

In degree \(k\), let \(U_{i,k}\) map a normalized base simplex to the
normalized average of its one-copy-per-block lifts.  Lemma~4.2 and
Theorem~4.3 of Hayakawa~\cite{hayakawa2026unweighted} give the symmetric
identity and asymmetric gap over real chains; scalar extension gives the same
statements over \(\mathbb C\):
\begin{equation}
\label{eq:unweight-operator}
\widehat\Delta_{i,k}U_{i,k}=FU_{i,k}\Delta_{i,k},
\qquad
\widehat\Delta_{i,k}|_{\ran(U_{i,k})^\perp}\succeq
\min_v f_v\,I.
\end{equation}
Use the same labeled copy block for every shared vertex at both filtration
levels.  With consistent orientations, simplicial inclusion then satisfies
\begin{equation}
\label{eq:unweight-natural}
\widehat J_{ij}U_{i,d}=U_{j,d}J_{ij}.
\end{equation}
There is no extra normalization factor because an inclusion preserves the
simplex degree, weights, labels, and copy set.

\begin{corollary}[Unweighted normalized persistence]
\label{cor:unweighted-hardness}
The reduction in \cref{thm:weighted-hardness} maps in polynomial time to
nested unweighted clique complexes with the same endpoint Betti numbers,
persistent rank, and normalized values \(3/4\) and \(1/8\).  Both unweighted
endpoint Laplacians retain inverse-polynomial gaps above their complete
kernels.
\end{corollary}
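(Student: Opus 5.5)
We apply Hayakawa's common-copy blow-up to the weighted pair \(X_{\rm in}\subseteq X_{\rm out}\) from \cref{thm:weighted-hardness}. We use the single multiplicity function \(f_v=Fw(v)^2\), \(F=\lambda^{-2}\), on the shared vertex set. Since \(\lambda=2^{-b}\) with \(b=O(\log t_{\max})\), we have \(F=2^{2b}=\poly\), so register vertices receive \(F\) copies and private vertices receive one. The blown-up graphs \(\widehat G_{\rm in}\subseteq\widehat G_{\rm out}\) are nested on a common labeled vertex set and have polynomially many vertices and edges. Their clique complexes are the unweighted instances, so the construction is polynomial time.

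\emph{Step 1 (homology and gaps).} We use \eqref{eq:unweight-operator} in degree \(d\) for \(i\in\{\rm in,out\}\). \(U_{i,d}\) is an isometry, since it maps normalized simplices to normalized averages over disjoint lift sets, so its columns are orthonormal. It intertwines \(\widehat\Delta\) with \(F\Delta\). Hence \(\widehat\Delta\) is block diagonal with respect to \(\ran U\oplus\ran U^\perp\): \(\ran U\) is invariant, and \(\widehat\Delta\) is self-adjoint, so \(\ran U^\perp\) is invariant as well.

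On \(\ran U\), \(\widehat\Delta\) is unitarily equivalent to \(F\Delta_i\). On \(\ran U^\perp\), it is bounded below by \(\min_vf_v=1\). Therefore \(\ker\widehat\Delta_{i,d}=U_{i,d}\ker\Delta_{i,d}\), which gives \(\beta_d(\widehat X_i)=\beta_d(X_i)\). The smallest positive eigenvalue is at least \(\min\{F\gamma_i,1\}\ge\gamma_i\), because \(F\ge1\). We output the old rational gap bounds, or \(\min\{FE_i,1\}\).

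\emph{Step 2 (persistent rank).} This is the main point, because persistence concerns the induced map and not only the endpoint dimensions. By the harmonic-projector formulation in \cref{sec:prelim}, the persistent rank is \(\rank P_{\widehat\cH_{\rm out}}\widehat J|_{\widehat\cH_{\rm in}}\).

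By Step 1, \(\widehat\cH_i=U_i\cH_i\) and \(P_{\widehat\cH_{\rm out}}=U_{\rm out}P_{\cH_{\rm out}}U_{\rm out}^*\). Using \eqref{eq:unweight-natural}, for \(h\in\cH_{\rm in}\) we get
\[
P_{\widehat\cH_{\rm out}}\widehat JU_{\rm in}h=U_{\rm out}P_{\cH_{\rm out}}U_{\rm out}^*U_{\rm out}Jh=U_{\rm out}P_{\cH_{\rm out}}Jh.
\]
\(U_{\rm out}\) is injective and \(U_{\rm in}\) is a bijection \(\cH_{\rm in}\to\widehat\cH_{\rm in}\). So the rank equals \(\rank P_{\cH_{\rm out}}J|_{\cH_{\rm in}}=\beta_d(X_{\rm in}\to X_{\rm out})\), which is \(6\) or \(1\).

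The delicate input is the naturality identity \eqref{eq:unweight-natural}. It needs the same labeled copy blocks at both levels, the shared weights, and consistent orientations. We would verify it directly on basis simplices: the lift set of a simplex of \(X_{\rm in}\) is the same set of cliques in \(\widehat X_{\rm out}\), and the normalization factor \(\prod f_v^{-1/2}\) depends only on the vertices.

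With \(\beta_d=8\) unchanged, the ratios remain \(3/4\) and \(1/8\), and hardness transfers with the same threshold.
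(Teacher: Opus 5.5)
Your proposal is correct and follows essentially the same route as the paper. It uses Hayakawa's symmetric/asymmetric decomposition to get \(\ker\widehat\Delta_{i,d}=U_{i,d}\ker\Delta_{i,d}\) and \(\gamma_+(\widehat\Delta_{i,d})\ge\min\{F\gamma_i,1\}\), and then the commuting identity \eqref{eq:unweight-natural} to make the induced harmonic maps unitarily equivalent, which your computation \(P_{\widehat\cH_{\rm out}}\widehat J U_{\rm in}=U_{\rm out}P_{\cH_{\rm out}}J\) spells out explicitly. One small point: the inequality \(\min\{F\gamma_i,1\}\ge\gamma_i\) also needs \(\gamma_i\le1\), not just \(F\ge1\); this is harmless, since you can cap \(\gamma_i\) at \(1\) or simply output \(\min\{FE_i,1\}\) as the paper does.
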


\begin{proof}
The asymmetric bound in \cref{eq:unweight-operator} excludes asymmetric
harmonic vectors, so
\[
\ker\widehat\Delta_{i,d}=U_{i,d}\ker\Delta_{i,d}.
\]
Together with \cref{eq:unweight-natural}, this makes every inclusion-induced
map on harmonic spaces unitarily equivalent to its weighted counterpart.
Endpoint nullities and induced ranks are therefore preserved.  If \(E_i\)
lower-bounds the positive weighted spectrum, then
\[
\gamma_+(\widehat\Delta_{i,d})
\ge\min\{FE_i,\min_v f_v\}=\min\{FE_i,1\}.
\]
By \cref{eq:weighted-gap}, \(FE_i=
\Omega(\eta^{26}g_i/t_{\max}^{24})\).  Since \(F\) is polynomial, so is the
blown-up graph.  More explicitly, if the weighted graph has
\(O(n+t_{\max})\) vertices and \(O(n^2+t_{\max}n)\) edges, a safe explicit
bound after blow-up is \(O(F(n+t_{\max}))\) vertices and
\(O(F^2(n^2+t_{\max}n))\) edges; labels and edges are generated by nested
loops in polynomial time.  Future private vertices may be present initially as isolated
vertices; the target degree is positive, so they add no target-degree chain.
\end{proof}

The operator decomposition is Hayakawa's; the filtration-specific step is
the commuting identity \cref{eq:unweight-natural}.  We do not count the
unweighting mechanism as a separate contribution.

\section{Relation to prior work}
\label{sec:related}

\paragraph{Normalized persistence.}
Lowe et al.~\cite{lowe2026normalized} define normalized harmonic persistence
(their Problem~9, our \cref{def:problem}), prove that a low-energy relaxation
of it is \(\mathsf{DQC}_1\)-hard and in \(\BQP\), prove \(\SDQCone\)-hardness
of the local-Hamiltonian analogue, and conjecture \(\SDQCone\)-hardness of
Problem~9 (their Conjecture~1).  Their Conjecture~2 asks for a polynomial-time
map from history Hamiltonians \(H_1\) and \(H_2=H_1+\lambda H_{\rm out}\) to
weighted clique complexes \(X_1\subseteq X_2\) with two properties:
kernel fidelity and gap, \(\beta_d(X_i)=\dim\ker H_i\) together with
computable inverse-polynomial gap bounds; and functoriality,
\(\beta_d(X_1\to X_2)=\dim\ran(P_{\ker H_2}|_{\ker H_1})\).  Their Lemma~11
shows that Conjecture~2 implies Conjecture~1.  \Cref{thm:transfer,thm:quotient}
establish both properties for the exact \(G_2\) history Hamiltonians of
\cref{sec:circuit}, with \(H_{\rm out}\) at unit coefficient rather than at a
small coefficient \(\lambda\), and with the induced map computed through the
quotient model rather than through compatible harmonic isometries as
formulated there.  Two differences prevent us from concluding Conjecture~1
itself.  First, our source is \(\BQPone^{G_2}\) with three mixed label
qubits rather than an \(\SDQCone\) verifier with a maximally mixed register.
The \(\SDQCone\) promise constrains the acceptance probability of the
maximally mixed state, and a trace promise does not in general determine the
dimension of the perfectly accepted subspace, which is what \(q_d\) measures
(\cref{app:source-boundaries}).  Second, the certified palette is real and
exact, so verifiers over an arbitrary gate set are not covered.  Their
containment result (Lemma~7) needs additional access assumptions, namely a
preparation circuit for the uniform mixture over \(\ker\Delta_{1,d}\) and an
inverse-polynomial lower bound on the nonzero principal angles between the
two harmonic spaces.  We do not establish these for our instances, and
containment of the promise problem of \cref{def:problem} in \(\BQP\) remains
open.

\paragraph{Clique homology gadgets.}
Crichigno and Kohler~\cite{crichigno2024clique} give a parsimonious reduction
in which satisfying states correspond one-to-one to holes, so exact
multiplicity by itself is not new; their instances carry no gap promise.
King and Kohler~\cite{king2026gapped} introduce the weighted gadgets and the
many-gadget spectral analysis behind gapped clique homology, proving a gap
when the logical Hamiltonian is positive definite, with a gadget weight
proportional to the logical gap; Hayakawa~\cite{hayakawa2026unweighted}
removes the weights under the same positive-definite promise.  Our transfer
theorem is formulated from finite zero-weight certificates, applies to
arbitrary geometric chains and to degenerate kernels, and decouples
\(\lambda\) from \(g\).  Its padding step differs from the coordinate-bulk
formulation displayed in the arXiv version of King and
Kohler~\cite[Claim~10.2]{king2023gappedv2}: there the bulk consists of all
coordinates containing a private vertex, which in the padded complex also
contains boundary terms obtained by deleting a weight-one outside vertex,
whereas we keep local private coordinates tensored with outside harmonics
in every bidegree (\cref{lem:scaling-padding}).  We also work with the fixed
local matrices before tensoring with the outside identity, so that no
operator bound on a projector of growing rank has to be inferred from
basiswise perturbation estimates.  These are differences of proof technique,
and we make no claim about the presentation in the published
version~\cite{king2026gapped}.

\paragraph{Persistence hardness.}
Gyurik et al.~\cite{gyurik2026provable} prove \(\BQPone\)-hardness, over the
gate set \(\{\mathrm{CNOT},U_{\rm Pyth}\}\), of deciding whether a
\emph{given} harmonic representative of \(X_1\) survives in \(X_2\), with a
gap promise on the final Laplacian.  Normalized persistence asks instead for
the fraction of the full initial homology that survives, with no
representative given, so a hardness proof must control the complete kernels
at both endpoints.  Persistent Laplacians~\cite{memoli2022persistent} give a
spectral formulation of persistent Betti numbers; we use ordinary endpoint
Laplacians for the promise and the quotient model for the map.

\paragraph{Sources and gate sets.}
Rudolph~\cite{rudolph2026universal} defines the gate-dependent classes,
proves exact error reduction inside gate sets containing \(G_2\), shows
\(\BQPone^{G}\subseteq\BQPone^{G_2}\) for every finite \(G\) with entries in
some \(\mathbb Q(\zeta_{2^k})\), and supplies the gain/loss split of
\(H\otimes H\) with rational three-term constraint vectors and the base
clique gadgets used in our palette.  Cade and
Crichigno~\cite{cade2024complexity} relate cohomology problems to
\(\QMAone\) and \(\mathsf{DQC}_1\), and Gyurik, Cade, and
Dunjko~\cite{gyurik2022towards} connected quantum topological data analysis
to \(\mathsf{DQC}_1\)~\cite{knill1998power}.  The history-state construction
with a unary clock and neighbor guards is
Kitaev's~\cite{kitaev2002classical,kempe2006complexity}, with the rational
amplitude bookkeeping of \cref{sec:circuit}.

\paragraph{Unweighting.}
Hayakawa's symmetric/asymmetric decomposition of the common-copy
blow-up~\cite[Lemma~4.2 and Theorem~4.3]{hayakawa2026unweighted} is used as a
black box.  Our only addition is the observation that identical labeled copy
blocks at both filtration levels make the symmetric isometries commute with
inclusion, \cref{eq:unweight-natural}.  We do not count this as a separate
contribution.

\section{Limitations and open problems}
\label{sec:limitations}

The construction has four substantive limits.  Its exact palette is finite
and real: arbitrary integer states and complex phases are outside the
theorem, so the hardness source is the gate-dependent class
\(\BQPone^{G_2}\) rather than ordinary \(\BQP\), unrestricted \(\SDQCone\),
or a gate-independent perfect-completeness class.  The clean source has
\(\beta_d(\Xin)=8\); ignored mixed qubits replicate this denominator
exactly, but an intrinsically growing family of distinct initial classes
would require another exact source.  The gap is a complexity promise rather
than a practical estimate: at locality six it scales conservatively as
\(\Omega(\eta^{28}g/t^{26})\) with \(g=\Omega(L^{-3})\), and the common-copy
blow-up multiplies the vertex count by \(F=\lambda^{-2}\).  Finally, the
hardness is for the promise problem with endpoint gaps; whether our instances
also satisfy the access assumptions under which Lowe et al.\ place the
problem in \(\BQP\) is not established.

Two questions would strengthen the complexity classification.
\begin{enumerate}
\item Is normalized harmonic persistence \(\SDQCone\)-hard, as conjectured
in~\cite{lowe2026normalized}?  The missing steps are a source whose
perfect-space fraction is separated by the \(\SDQCone\) trace promise, and a
certified palette for the corresponding gate set.
\item Do the constructed filtrations satisfy an inverse-polynomial lower
bound on the nonzero principal angles between \(\cH_d(\Xin)\) and
\(\cH_d(\Xout)\), and admit efficient preparation of the initial harmonic
mixture?  Establishing both would place the instances in \(\BQP\)
by~\cite[Lemma~7]{lowe2026normalized}.
\end{enumerate}

Within its limits, the paper establishes a reusable criterion: a new clique
reduction obtains exact kernel multiplicity, an inverse-polynomial gap above
the whole kernel, and natural persistent ranks by certifying, for a finite
palette, local zero-weight kernels, private coercivity, exact filling, and
interfaces, with no growing-dimensional perturbation argument.

\bibliographystyle{plainurl}
\bibliography{references}

\clearpage
\appendix
\section{Proof of the all-chain transfer theorem}
\label{app:transfer-proof}

This appendix proves \cref{thm:transfer} from the finite local hypotheses. All constants below depend only on the fixed palette.

\subsection{Local scaling and harmonic padding}
\label{app:local-padding}

Let \(W_k\) be the diagonal matrix of simplex weights.  In normalized
coordinates,
\[
\partial_{w,k}=W_{k-1}^{-1}\partial_{1,k}W_k.
\]
For \(0<\lambda\le1\), invertible left and right scaling at the fixed rank of
\(\partial_{1,k}\) gives
\[
\sigma_{\min}^+(\partial_{w,k})
\ge\lambda^{k+1}\sigma_{\min}^+(\partial_{1,k}).
\]
Indeed, put \(K=\ker\partial_{1,k}\).  For every unit
\(x\perp W_k^{-1}K\),
\[
\dist(W_kx,K)\ge\sigma_{\min}(W_k).
\]
Apply the least nonzero singular value of the unweighted boundary.  The other
diagonal factor satisfies
\[
\sigma_{\min}(W_{k-1}^{-1})\ge1.
\]
The Hodge decomposition applies these estimates separately to the down- and
up-exact sectors and gives
\[
\gamma_+(\Delta_{Y_a,d_a}(\lambda))
\ge c_a\lambda^{2d_a+4}.
\]
A finite palette and \(d_a=2m_a-1\), \(m_a\le m\), give the common bound
\cref{eq:local-gap}.

For padding, use augmented reduced chain complexes, including degree \(-1\),
in every join bidegree:
\[
\widetilde C_d(Y_a*R_{\rm out})
=\bigoplus_{r+s=d-1}
\widetilde C_r(Y_a)\otimes\widetilde C_s(R_{\rm out}).
\]
On each summand the join Laplacian is the tensor sum
\[
\Delta_{Y_a*R_{\rm out}}
=\Delta_{Y_a}\otimes I+I\otimes\Delta_{R_{\rm out}}.
\]
The outside register has reduced homology only in its top degree and a
constant positive gap on every other mode.  At the global target degree,
therefore, the only outside-harmonic bidegree is the active target degree
tensored with \(\cH_{\rm out}\).  This proves the stated zero-weight kernel
and puts every other bidegree above a uniform floor.  Tensor the private
output projector with \(P_{\cH_{\rm out}}\); both outside differentials vanish
after this projection, preserving the private-pair bounds.

The same tensor-sum structure gives the padded form of
\cref{eq:local-gap}: the positive eigenvalues of the padded Laplacian in
degree \(d\) are sums \(\mu_1+\mu_2\) of factor eigenvalues.  In every
bidegree other than \((d_a,s_{\rm top})\) the outside eigenvalue \(\mu_2\) is
at least the constant gap of \(R_{\rm out}\), and in the bidegree
\((d_a,s_{\rm top})\) a positive sum has \(\mu_2=0\) and
\(\mu_1\ge\gamma_+(\Delta_{Y_a,d_a}(\lambda))\ge c\lambda^\kappa\).  Hence
\(\gamma_+\) of the padded degree-\(d\) Laplacian is at least
\(\min\{c\lambda^\kappa,\gamma_+(\Delta_{R_{\rm out}})\}\ge c'\lambda^\kappa\)
for \(0<\lambda\le1\).

Finally, the relative join formula identifies the padded connecting map with
the active connecting map tensored with
\(\cH_{\rm out}\).  Since all register weights are one, the positive-weight
diagonal gauge fixes the register coordinates.  Thus the padded boundary
intersection is \(\ran(\pi_a\otimes I_{\cH_{\rm out}})\), completing the proof
of \cref{lem:scaling-padding}.

\subsection{Global decomposition and interface energy}

If \(A=\varnothing\), then \(X_A=R\) and \(K_A=V\), and
\cref{eq:transfer-bound} holds with any \(C\ge\gamma_R^{-1}\): in its top
degree the register Laplacian \(\Delta_R=\partial_R^*\partial_R\) is the
tensor sum of the bowtie edge Laplacians, as in \cref{app:local-padding}, so
it has kernel \(V\) and a constant positive floor \(\gamma_R\) on
\(V^\perp\), whence
\(\|(I-P_V)x\|^2\le\gamma_R^{-1}\langle x,\Delta_Rx\rangle
\le\gamma_R^{-1}\langle x,\Delta_Rx\rangle/(g_A\lambda^\kappa)\)
because \(g_A\le1\) and \(\lambda\le1\).  Assume henceforth \(t_A\ge1\).

Write the target chain space as the orthogonal coordinate sum
\[
C_d(X_A)=C_d(R)\oplus\bigoplus_{j\in A}C_{d,j}^{\rm priv}.
\]
For a unit chain \(x\), write
\[
x=\omega_0+\sum_j\omega_j,
\qquad
x_j=\omega_0+\omega_j,
\qquad
U=\sum_j\|\omega_j\|^2,
\qquad
e=\langle x,\Delta_Ax\rangle.
\]
Let \(a=\partial_R\omega_0\), \(y_j=L_j\omega_j\), \(y=\sum_jy_j\), and \(r=a+y\). The vector \(r\) is the shared-register component of the global down-boundary. Since
\(
a+y_j=r-(y-y_j)
\),
\[
\sum_j\|a+y_j\|^2
\le2t_A\|r\|^2+2\sum_j\|y-y_j\|^2.
\]
The elementary identity
\[
\sum_j\|y-y_j\|^2
=(t_A-2)\|y\|^2+\sum_j\|y_j\|^2
\]
also holds for \(t_A=1\). Put
\[
\rho=\lambda^{-2}\|[L_1\ \cdots\ L_{t_A}]\|^2
\le N_*t_A,
\qquad
\chi=(t_A-2)_+\rho+N_*\le C t_A^2.
\]
Then~\cref{eq:interface} gives
\[
\sum_j\|y-y_j\|^2\le\chi\lambda^2U.
\]
All nonregister down-boundary coordinates and all up-boundary coordinates are private and partition by gadget. We obtain
\begin{equation}
\label{eq:appendix-local-energy}
\sum_j\|D_j(\lambda)x_j\|^2
\le2t_Ae+2\chi\lambda^2U.
\end{equation}

\subsection{Zero-weight and private-sector coercivity}

Let \(Q_j\) also denote the orthogonal projector onto the padded zero-weight private subspace \(Q_j\otimes\cH_{\rm out}\) of gadget \(j\), and let \(A_{j,0}=I-P_V-Q_j\) in the padded local chain space; by \cref{lem:scaling-padding} this is the projector onto the orthogonal complement of \(\ker D_{j,0}\). The positive singular-value floor of \(D_{j,0}\) on \(\ran A_{j,0}\), together with
\(
D_j(\lambda)=D_{j,0}+\lambda D_{j,1}
\), gives
\begin{align}
S_A:=\sum_j\|A_{j,0}x_j\|^2
&\le C\sum_j\|D_{j,0}x_j\|^2\notag\\
&\le C\left(
t_Ae+\chi\lambda^2U+
\lambda^2\sum_j\|x_j\|^2
\right)\notag\\
&\le C\left(t_Ae+\chi\lambda^2U+t_A\lambda^2\right).
\label{eq:appendix-SA}
\end{align}
The final inequality uses
\(
\sum_j\|x_j\|^2
=t_A\|\omega_0\|^2+U\le t_A
\).

Because \(T_jP_V=0\),
\[
T_jQ_jx_j=T_jx_j-T_jA_{j,0}x_j.
\]
The lower bound on \(T_j\) restricted to \(Q_j\), the estimate \(\|T_j\|\le C\lambda\), and orthogonality of the retained private output blocks imply
\begin{equation}
\label{eq:appendix-SQ}
S_Q:=\sum_j\|Q_jx_j\|^2
\le C\left(\frac{e}{\lambda^2}+S_A\right).
\end{equation}
The orthogonality here concerns the retained \(T_j\) outputs. The shared \(L_j\) outputs were handled by their horizontal operator norm and need not be orthogonal.

Let \(N=\|(I-P_V)x\|^2\). Summing the orthogonal decompositions \(P_V\oplus Q_j\oplus\ran A_{j,0}\) gives
\[
N\le U+t_A\|(I-P_V)\omega_0\|^2=S_A+S_Q.
\]
Since \(U\le N\), substituting~\cref{eq:appendix-SA,eq:appendix-SQ} and using \(\chi\le Ct_A^2\) yields
\[
N\le C\left[t_A\lambda^2+(t_A+\lambda^{-2})e\right]
+Ct_A^2\lambda^2N.
\]
Choosing \(\lambda\le c_0/t_{\max}\) absorbs the last term and proves~\cref{eq:geometric-leakage}.

\subsection{Logical coercivity}

For each padded gadget \(Y_j\), exact filling gives \(\ran\Pi_j\subseteq B_d(Y_j)\). The up-Laplacian \(\partial_{d+1}\partial_{d+1}^*\) of \(Y_j\) has kernel \(B_d(Y_j)^\perp\), and its positive eigenvalues are positive eigenvalues of the padded degree-\(d\) Laplacian, which the padded form of~\cref{eq:local-gap} bounds below by \(c\lambda^\kappa\). Therefore, after the common embedding into \(X_A\),
\[
\Delta_j^\uparrow\succeq c\lambda^\kappa\Pi_j.
\]
The top-dimensional register has no up-columns, and every degree-\((d+1)\) simplex with a private vertex belongs to exactly one gadget. Hence the global up-Laplacian is the exact sum of the embedded local up-Laplacians, proving~\cref{eq:logical-domination}.

Since \(K_A\subseteq V\), the squared distance decomposes as
\[
\|(I-P_{K_A})x\|^2
=N+\langle x,(P_V-P_{K_A})x\rangle.
\]
\Cref{eq:logical-gap,eq:logical-domination} give
\[
\langle x,(P_V-P_{K_A})x\rangle
\le\frac{e}{cg_A\lambda^\kappa}.
\]
Together with~\cref{eq:geometric-leakage},
\[
\|(I-P_{K_A})x\|^2
\le C\left[t_A\lambda^2+(t_A+\lambda^{-2})e
+\frac{e}{g_A\lambda^\kappa}\right].
\]
For \(\lambda\le c_0/t_{\max}\), \(g_A\le1\), and \(\kappa\ge2\), the last energy term dominates the middle energy terms up to a palette constant. This proves~\cref{eq:transfer-bound}.

\subsection{Exact multiplicity and the whole-positive-spectrum gap}

Let \(\mathcal L_A(E_A)\) be the span of all eigenvectors of \(\Delta_A\) with eigenvalue below \(E_A\). Every unit \(x\in\mathcal L_A(E_A)\) obeys \(e<E_A\). Choosing the constants in~\cref{eq:lambda-E-choice} makes the right-hand side of~\cref{eq:transfer-bound} strictly less than \(\eta^2<1\). Therefore \(P_{K_A}x\neq0\) for every nonzero \(x\in\mathcal L_A(E_A)\), so \(P_{K_A}|_{\mathcal L_A(E_A)}\) is injective and
\begin{equation}
\label{eq:low-dimension}
\dim\mathcal L_A(E_A)\le\dim K_A.
\end{equation}

Every \(v\in V\) remains a cycle after attaching the gadgets. By \cref{lem:independent-filling}, the map \(V/W_A\to H_d(X_A)\) is injective. Its domain has dimension \(\dim K_A\), so Hodge theory supplies at least \(\dim K_A\) zero eigenvectors of \(\Delta_A\). These zero vectors belong to \(\mathcal L_A(E_A)\). Combining this lower bound with~\cref{eq:low-dimension} forces equality throughout: the geometric kernel has dimension \(\dim K_A\), and no positive eigenvalue lies below \(E_A\).

If \(K_A=0\), any unit vector in \(\mathcal L_A(E_A)\) would have distance one from \(K_A\), contradicting the strict \(\eta<1\) bound. Thus the same argument proves \(\Delta_A\succeq E_AI\) in the zero-kernel case. No additional global relative-acyclicity assumption is required.

\subsection{Proof of quotient naturality}
\label{app:quotient-proof}

\Cref{eq:boundary-intersection} makes \(q_A\) well-defined and injective.
Its domain has dimension
\[
\dim(V/W_A)=\dim K_A.
\]
Hodge theory and
\cref{eq:kernel-gap-conclusion} give
\[
\dim H_d(X_A)=\dim\ker\Delta_A=\dim K_A,
\]
so \(q_A\) is also surjective.  If \(A\subseteq B\), both paths in the
diagram of \cref{thm:quotient} send \(v+W_A\) to the class of the same
register cycle \(v\) in \(X_B\).  Thus the diagram commutes.  The quotient
map is onto because \(W_A\subseteq W_B\), and its rank is
\(\dim(V/W_B)=\dim K_B\).

\section{History-gap and source calculations}
\label{app:source-details}

\subsection{Rational gain/loss split}
\label{app:rational-split}

On a clock bit \(c\) and work bit \(w\), define
\[
\begin{aligned}
u_0&=(\ket{00}+\ket{01}-\ket{10})/\sqrt3,&
u_1&=(\ket{00}-\ket{01}-\ket{11})/\sqrt3,\\
d_0&=(\ket{00}-\ket{10}-\ket{11})/\sqrt3,&
d_1&=(\ket{01}-\ket{10}+\ket{11})/\sqrt3.
\end{aligned}
\]
The vectors in each pair are orthonormal.  If consecutive history amplitudes
are \(f_t,f_{t+1}\), orthogonality to \(u_0,u_1\) is equivalent to
\(f_{t+1}=\sqrt2Hf_t\); orthogonality to \(d_0,d_1\) is equivalent to
\(f_{t+1}=Hf_t/\sqrt2\).  Applying the first relation to one work bit and the
second to the other gives \(H\otimes H\).

\subsection{Weighted path estimate}

With the unary encoding \(\ket t=\ket{1^t0^{T-t}}\), the legal clock span and
its orthogonal complement are invariant under the neighboring-clock-guarded
propagation, input, and output projectors.  The diagonal clock term
\[
H_{\rm clock}=\sum_{j=1}^{T-1}\ket{01}\!\bra{01}_{j,j+1}
\]
has energy at least one on every illegal computational-basis clock string.
By Hermiticity, the whole unary-clock Hilbert space is the orthogonal sum of
the invariant legal and illegal sectors.  It therefore
suffices to prove the smaller inverse-polynomial bound on the legal sector.
For a legal-clock state, write \(f_t=s_tV_t\xi_t\), with the notation of~\cref{eq:history-isometry}. Its norm and initial-history energy are
\[
N(\xi)=\sum_tw_t\|\xi_t\|^2,
\qquad
E(\xi)=\sum_{t=1}^{L-1}c_t\|\xi_t-\xi_{t-1}\|^2
+\langle\xi_0,A_{\rm in}\xi_0\rangle,
\]
where \(w_t=s_t^2\in[1,2]\), \(c_t\in\{1/2,2/3\}\), and
\(A_{\rm in}\) is the clean-input penalty at time zero, a sum of commuting
computational-basis projectors, one per clean qubit.  It is not in general a
projector: a basis string violating \(k\) clean bits has energy \(k\).  All
that is used below is
\[
\ker A_{\rm in}=C,\qquad A_{\rm in}\succeq P_{C^\perp}.
\]
Split every \(\xi_t\) into the clean subspace \(C\) and its orthogonal
complement, and split the energy accordingly as \(E(\xi)=E_C+E_B\) with
\[
E_C=\sum_{t=1}^{L-1}c_t\|\xi_t^C-\xi_{t-1}^C\|^2,\qquad
E_B=\sum_{t=1}^{L-1}c_t\|\xi_t^{C^\perp}-\xi_{t-1}^{C^\perp}\|^2
+\langle\xi_0,A_{\rm in}\xi_0\rangle ;
\]
the two difference terms are orthogonal because \(C\) is invariant under
the splitting.

For a state orthogonal to all valid histories, the clean component has weighted mean zero:
\[
\sum_tw_t\xi_t^C=0.
\]
The weighted variance identity and path Cauchy--Schwarz inequality give
\[
\sum_tw_t\|\xi_t^C\|^2
=\frac1{2\sum_tw_t}
\sum_{i,j}w_iw_j\|\xi_i^C-\xi_j^C\|^2
\le2L^2E_C.
\]
For the dirty component, set
\(a=\|\xi_0^{C^\perp}\|^2\le\langle\xi_0,A_{\rm in}\xi_0\rangle\le E_B\) and
\(
D_B=\sum_t\|\xi_t^{C^\perp}-\xi_{t-1}^{C^\perp}\|^2\le2E_B
\), using \(c_t\ge1/2\).
The bound
\[
\|\xi_t^{C^\perp}\|^2\le2a+2(L-1)D_B
\]
and \(\sum_tw_t\le2L\) imply
\[
\sum_tw_t\|\xi_t^{C^\perp}\|^2
\le2L\bigl(2E_B+4(L-1)E_B\bigr)\le8L^2E_B.
\]
Adding both sectors proves \(N(\xi)\le8L^2E(\xi)\) on the orthogonal complement of the complete initial kernel, establishing the first bound in~\cref{eq:history-gaps}.

\subsection{Adding the output projector}

We record the two-positive-operator estimate used for the second bound in~\cref{eq:history-gaps}. Let \(P\) project onto \(\ker A\), suppose
\(A\succeq g(I-P)\), and let \(R\) be an orthogonal projector.  Put
\(\mathcal S=\ker A\cap\ker R\), and suppose \(PRP\succeq\alpha P\) on
\(\ker A\ominus\mathcal S\).  On every nontrivial canonical block for the
pair \(P,R\), for some \(\rho\in[\alpha,1]\),
\[
 P=\begin{pmatrix}1&0\\0&0\end{pmatrix},\qquad
 R=\begin{pmatrix}
 \rho&\sqrt{\rho(1-\rho)}\\
 \sqrt{\rho(1-\rho)}&1-\rho
 \end{pmatrix}.
\]
Thus \(g(I-P)+R\) has trace \(g+1\), determinant \(g\rho\), and
smallest eigenvalue
\[
 \frac{g+1-\sqrt{(g+1)^2-4g\rho}}2.
\]
This quantity increases with \(\rho\); the one-dimensional canonical blocks
outside \(\mathcal S\) obey the same lower bound.  Since
\(A+R\succeq g(I-P)+R\) and both operators have kernel \(\mathcal S\),
\begin{equation}
\label{eq:two-positive}
\gamma_+(A+R)
\ge
\frac{g+1-\sqrt{(g+1)^2-4g\alpha}}2
\ge\frac{g\alpha}{g+1}.
\end{equation}
For the history Hamiltonian, \(g=1/(8L^2)\) and \(\alpha=(1-r)/Z\ge1/(3L)\) when \(r\le1/3\). Substitution into~\cref{eq:two-positive} gives
\[
\gamma_+(H_{\rm out})
\ge\frac1{3L(8L^2+1)}.
\]
The kernel identity follows directly from positivity:
\[
\ker(H_{\rm in}+H_{\rm rej})
=\ker H_{\rm in}\cap\ker H_{\rm rej}
=\cV\ker(I-M).
\]

\subsection{Exact eight-label compression}

Let \(U_x\) denote the fixed-input verifier and \(q\) its output bit. The label \(z\) is never changed. After computing and uncomputing the constant-size acceptance predicate, compression through all clean work registers has no matrix element between distinct labels. On \(z=000\), acceptance is the identity. On five labels it is the scalar
\(
p_x=\|P_{\rm acc}U_x\ket{0\cdots0}\|^2
\),
and on two labels it is zero. This proves~\cref{eq:eight-label} even for a coherent vector across label values.

The normalized trace is
\[
\frac{\operatorname{Tr}M_x}{8}
=\frac{1+5p_x}{8}.
\]
If \(p_x=1\), this equals \(3/4\) and coincides with the perfect-space fraction. If \(p_x\le1/3\), it is at most \(1/3\), the perfect-space fraction is exactly \(1/8\), and every off-perfect eigenvalue is at most \(1/3\). Thus the construction simultaneously supplies the trace promise, the perfect-space separation, and the off-perfect ceiling needed by the output-gap proof.

\subsection{Source-interface boundary}
\label{app:source-boundaries}

The exact source used by \cref{thm:weighted-hardness} begins with a
fixed-input \(\BQPone^{G_2}\) verifier.  The three mixed label bits are added
by the reduction and are the entire input space of
\cref{eq:eight-label}; the verifier's input and all predicate scratch remain
clean.  Consequently the initial history dimension is exactly eight.  No
trace-to-rank inference, unrestricted \(\SDQCone\) identification, or
approximate gate compilation enters the proof.

The finite gadget evidence needed to compile these circuit terms is given in
\cref{app:palette}.  The theorem uses those exact finite objects and does not
infer their topology from floating-point spectra.

For an arbitrary acceptance operator with
\(M|_{S^\perp}\preceq rI\), put \(f=\dim S/D\) and
\(p=\operatorname{Tr}(M)/D\).  Spectral decomposition gives
\[
f\le p\le r+(1-r)f.
\]
Thus trace promises \(p\ge a\) versus \(p\le b\) force a perfect-space
fraction gap only when
\(\max\{0,(a-r)/(1-r)\}>b\).  The eight-label source avoids this obstruction
by fixing the perfect fractions directly.

An exact circuit over \(\{X,\CX,\CCX,H\}\) admits the following real-gate
extension.  Add one maximally mixed, unmeasured spectator \(a\) and replace
each \(H_j\) by \(H_j\otimes H_a\).  If there are \(h\) Hadamards, then
\[
U'=U\otimes H_a^h,\qquad J'=J\otimes I_a,\qquad
P'_{\rm acc}=P_{\rm acc}\otimes I_a.
\]
For either parity of \(h\),
\[
(J')^*(U')^*P'_{\rm acc}U'J'=M\otimes I_a.
\]
Both the input dimension and perfect eigenspace dimension double, while the
fraction and off-perfect ceiling are unchanged.  This calculation does not
cover complex phases.

\section{Finite palette and guard certificates}
\label{app:palette}

This appendix identifies the finite objects used in \cref{lem:palette}.  The
underlying active graphs are four explicit constructors in the source artifact:
\path{state_0m1}, \path{state_00m11}, \path{state_01m10}, and
\path{state_00m10m11}.  They accompany Rudolph's gain/loss
construction~\cite{rudolph2026universal}.
The supplementary certificate archive records the complete graphs, oriented
integer fillings, kernel bases, modular rank witnesses, and Gram determinants.
No numerical eigenvalue threshold is used.

\begin{table}[ht]
\centering
\begin{tabularx}{\linewidth}{@{}>{\raggedright\arraybackslash}Xrrrr>{\raggedright\arraybackslash}X@{}}
\toprule
Active vector & degree & vertices & \(\dim V\) & \(\dim Q\) & certificate mod \(p\) \\
\midrule
\(\ket0-\ket1\) & 1 & 16 & 2 & 8 &
\(\det G=512\) \\
\(\ket{00}-\ket{11}\) & 3 & 33 & 4 & 112 &
\(\det G=655970\) \\
\(\ket{01}-\ket{10}\) & 3 & 33 & 4 & 112 &
signed relabeling \\
\(\ket{00}-\ket{10}-\ket{11}\) & 3 & 41 & 4 & 176 &
\(\det G=797443\) \\
three remaining gain/loss vectors & 3 & 41 & 4 & 176 &
signed relabelings \\
\bottomrule
\end{tabularx}
\caption{Exact active-palette data.  Vertex counts include the \(7m_a\)
register vertices of the active bowties.  In each determinant row,
\(p=1000003\) and \(G=T_{a,1}^*T_{a,1}\) on \(Q_a\); a nonzero residue
proves injectivity over \(\mathbb Q\).}
\label{tab:palette-certificates}
\end{table}

For the one-qubit atom, exact boundary ranks are \(15,24\) and the Betti
numbers are \((1,1,0)\); a denominator-one 24-term two-chain fills the target
cycle.  At zero weight the target differential pair has rank \(30\), and the
two register cycles plus eight private vectors give its complete
ten-dimensional kernel.  For \(\ket{00}-\ket{11}\), the clique counts in
degrees zero through four are
\((33,241,646,712,272)\), the boundary ranks are
\((32,209,437,272)\), and the Betti numbers are \((1,0,0,3,0)\).
A denominator-one 248-term four-chain fills the target; rank \(596\) of the
zero-weight pair matches the four register and 112 private kernel vectors.
Swapping the second bowtie petals transports this certificate exactly to
\(\ket{01}-\ket{10}\).

For the three-term representative, the clique counts in degrees zero through
five are
\[
(41,322,914,1082,477,30),
\]
and the boundary ranks are \((40,282,632,447,30)\).  Hence the target
degree-three Betti number is three.  A denominator-one 382-term four-chain
fills \(\ket{00}-\ket{10}-\ket{11}\), while three complementary register
cycles remain independent modulo boundaries.  The zero-weight pair has rank
\(902\), matching its kernel \(V\oplus Q\) of dimension \(4+176\).
Signed permutations of bowtie petals transport the graph, filling, kernel,
and private pair to the other three gain/loss vectors.

\subsection{Selected-cycle guards}

We record the guard closure because it is part of the theorem interface, not
an appeal to a general integer-state gadget.  Let \(Y\) implement
\(\pi_\phi\) on register \(R\), let \(B\) be a fresh bowtie with petals
\(S_0,S_1\), and choose \(\beta\in\{0,1\}\).  Add all edges of \(B\), join
every vertex of \(R\) to every vertex of \(B\), and join each private vertex
of \(Y\) only to the selected petal \(S_\beta\).  The resulting flag complex
is
\begin{equation}
\label{eq:guard-union}
Z=(Y*S_\beta)\cup(R*B),\qquad
(Y*S_\beta)\cap(R*B)=R*S_\beta .
\end{equation}
Its new register is \(R'=R*B\).  If \(\gamma_\beta\) is the normalized
oriented cycle of the selected petal, then \(Z\) implements
\[
\pi_\phi\otimes\ket{\beta}\!\bra{\beta},\qquad
Q'=Q\otimes\mathbb C\gamma_\beta .
\]

Indeed, reduced relative chains give
\[
\widetilde C_*(Z,R')
\cong\widetilde C_*(Y,R)\widehat\otimes
\widetilde C_*(S_\beta).
\]
The connecting map therefore tensors the old exact boundary intersection
with \([\gamma_\beta]\), proving exact filling and no additional killed
register class.  At zero private weight the private differential is the
corresponding tensor-sum differential.  Since \(S_\beta\) has reduced
homology only in degree one, the complete target kernel is
\(V'\oplus Q'\); every other bidegree pays a fixed positive guard gap.
Finally, project the old private output pair onto the selected harmonic line.
Both outside differentials vanish under this projection, so
\[
T_Z(\lambda)=\lambda T_{Z,1},\qquad
T_{Z,1}V'=0,\qquad
\|T_{Z,1}(q\otimes\gamma_\beta)\|
\ge b\|q\otimes\gamma_\beta\|.
\]
Removing the unique old private vertex also gives
\(LL^*\preceq N_{\rm priv}\lambda^2I\).  Iterating this construction for the
constant number of required basis guards preserves a fixed finite palette.

The basis projector itself is the special case
\(R\cup(v*S_\beta)\), where a private apex \(v\) cones the selected petal.
Its private zero-weight kernel \(Q\) is zero, so the private-pair condition is
vacuous.  Together with \cref{tab:palette-certificates}, this proves
\cref{lem:palette}.

\subsection{Reproducibility boundary}

The supplementary archive contains the exact witnesses in
\begin{itemize}
\item \path{RUDOLPH_REPRESENTATIVE_BULK_CERTIFICATE.json},
\item \path{REMAINING_ACTIVE_ATOM_CERTIFICATES.json},
\item \path{ACTIVE_HADAMARD_ORBIT_CERTIFICATE.json}, and
\item \path{SELECTED_CYCLE_GUARD_CHECKS.json}.
\end{itemize}
Their checkers recompute the stated ranks, fillings, relabelings, and guard
identities.  These finite checks certify the displayed palette; the analytic
implication from those certificates is the proof of \cref{thm:transfer}.

\section*{AI Disclosure}
We used ChatGPT and Claude as auxiliary tools for literature search,
brainstorming, proof checking, and drafting and editing of the exposition
(Sections~1, 3--8, and Appendices~A--C).
The authors evaluated and revised all AI-assisted material and take full
responsibility for the correctness and originality of the submission,
including every proof and reference.

\end{document}